\def\BibTeX{{\rm B\kern-.05em{\sc i\kern-.025em b}\kern-.08em
    T\kern-.1667em\lower.7ex\hbox{E}\kern-.125emX}}
\newtheorem{remark}{Remark}
\newtheorem{assumption}{Assumption}
\newtheorem{theorem}{Theorem}
\begin{document}
\title{Distributed Fault-Tolerant Control for Heterogeneous MAS with Prescribed Performance under Communication Failures}
\author{Yongkang Zhang, Bin Jiang, \IEEEmembership{Fellow, IEEE}, and Yajie Ma, \IEEEmembership{Member, IEEE}
}

\maketitle

\begin{abstract}
This paper presents a novel approach employing prescribed performance control to address the distributed fault-tolerant formation control problem in a heterogeneous UAV-UGV cooperative system under a directed interaction topology and communication link failures. The proposed distributed fault-tolerant control scheme enables UAVs to accurately track a virtual leader's trajectory and achieve the desired formation, while ensuring UGVs converge within the convex hull formed by leader UAVs. By accounting for differences in system parameters and state dimensions between UAVs and UGVs, the method leverages performance functions to guarantee predefined transient and steady-state behavior. Additionally, a variable prescribed performance boundary control strategy with an adaptive learning rate is introduced to tackle actuator saturation, ensuring reliable formation tracking in real-world scenarios. Simulation results demonstrate the effectiveness and robustness of the proposed approach. 
\end{abstract}

\begin{IEEEkeywords}
Heterogeneous multiagent systems (MASs), prescribed performance, distributed fault-tolerant control, UAV-UGV cooperative system, formation control, communication link failures 
\end{IEEEkeywords}

\section{Introduction}
As the application of multi-agent systems in complex tasks becomes increasingly widespread, their collaborative control techniques have attracted significant attention.  In practical applications, agents will have different system characteristics and dynamic structures, and multi-agent collaboration in heterogeneous UAV-UGV (Unmanned Aerial Vehicle-Unmanned Ground Vehicle) systems can also significantly enhance the efficiency of task execution\cite{bib1,bib2,bib3}. However, challenges such as dynamic changes in communication topology\cite{bib4}, link failures\cite{bib5}, and environmental disturbances\cite{bib6} pose significant threats to the robustness and reliability of these systems. In recent years, distributed collaborative control and fault-tolerant control strategies have made some research progress. However, existing methods still face limitations in addressing the dynamic changes and fault-tolerance requirements of heterogeneous systems\cite{bib7}. It should be pointed out that most existing collaborative control designs focus on resolving consensus issues\cite{bib8}. In contrast, the research on formation tracking control of MAS has been widely applied in recent years and has become an important issue that needs further research.

It has been noted that in recent years, there have been some works that have addressed issues related to multi-agent collaborative control, see \cite{bib9,bib10,bib11,bib12,bib13,bib14,bib15,bib16,bib17} and references therein. In detail, Chen et al. proposed an adaptive synchronization control method that focuses on addressing the impact of communication link faults on the cooperative performance of MAS and enhances system robustness through distributed state observers\cite{bib9}. Tsai et al. investigated leader-follower formation control for nonholonomic mobile robotic systems and introduced a backstepping sliding-mode-based method to achieve trajectory tracking and formation maintenance\cite{bib10}. Zhao et al. developed a time-varying formation guidance law for cooperative missile interception under switching topologies, ensuring efficient interception in dynamic multi-target environments\cite{bib11}. Bechlioulis and Rovithakis proposed a fully decentralized control protocol that achieves rapid synchronization of high-order nonlinear uncertain MAS by prescribing performance functions\cite{bib12}. Gong et al. focused on adaptive fault-tolerant formation control for heterogeneous MAS and proposed a distributed fault-tolerant control strategy to handle communication link faults and external disturbances\cite{bib13,bib14}. Qian et al. designed event-triggered and self-triggered adaptive control mechanisms to address the consensus problem of linear MAS under external disturbances\cite{bib15}. Hu and Jin studied formation control for UAV teams under dynamic environmental constraints and physical attackers, proposing an environment-aware dynamic constraint-based control architecture\cite{bib16}. Ma et al. presented a practical prescribed-time fault-tolerant control protocol for mixed-order heterogeneous MAS, ensuring consistent tracking and robust stability within a short time frame\cite{bib17}. The presence of heterogeneous system parameters and structures greatly complicates the control design process, making the cooperative formation problem in heterogeneous MASs a critical and challenging issue that requires focused attention. 

Transient performance and steady-state performance are widely recognized as key metrics for evaluating control systems. In traditional control methods, the tracking error is typically driven to converge to a residual set, the size of which is often uncertain. While these controllers may achieve satisfactory steady-state error, they generally fail to guarantee desired transient performance—such as settling time and maximum overshoot—due to the lack of suitable tools. To address this limitation, a novel control framework known as prescribed performance control (PPC) has been introduced\cite{bib21}. PPC ensures that the tracking error remains within an arbitrarily small residual set, converges at a rate no slower than a predetermined constant, and maintains a maximum overshoot below a specified value. This approach enables the tracking error to simultaneously satisfy both transient and steady-state performance requirements\cite{bib18,bib19,bib20,bib21}. In detail, Bu et al. proposed a novel PPC method based on back-stepping, integrating performance functions and error transformation techniques to address tracking control problems in nonlinear uncertain dynamic systems. Its notable feature is the ability to achieve small overshoot or even zero overshoot in tracking outputs\cite{bib18}. Mao et al., focusing on strict-feedback systems with mismatched uncertainties, introduced adaptive fuzzy control techniques. They not only designed performance functions with prescribed time constraints but also proposed a novel error transformation function that effectively resolved the issues of initial value dependency and singularity in traditional methods\cite{bib19}. Fan et al., on the other hand, developed a low-complexity PPC method based on a nonlinear tracking differentiator for motion tracking control of space manipulators. By surpassing the computational complexity and hardware requirements of traditional methods, their approach achieved efficient transient performance and prescribed stability\cite{bib20}. However, these studies primarily focus on prescribed performance control for general homogeneous nonlinear MASs, without addressing the heterogeneous dynamic structures and system parameters characteristic of practical UAV–UGV collaborative systems. 

In practical networked MASs, two common types of faults are unknown actuator and sensor faults in individual agents, as well as communication link faults within interaction networks. Notably, fault-tolerant control (FTC) has emerged as an effective method for maintaining system performance under fault conditions and has garnered significant attention across various engineering fields over the past two decades\cite{bib23,bib24,bib25,bib26,bib27}. However, communication link faults in multi-agent systems have rarely been studied. This paper draws on the communication link fault model from Chen J.'s 2017 paper, adopting one of its commonly used fault forms\cite{bib22}. 

This paper investigates the distributed fault-tolerant formation control problem for a heterogeneous UAV-UGV cooperative system with communication link failures under a directed interaction topology, using prescribed performance control. Compared with existing works, the main features and contributions of this paper are summarized as follows: 

\begin{enumerate}
    \item This paper proposes a distributed prescribed performance fault-tolerant formation control scheme for multi-UAV-UGV collaborative systems with communication link failures. The proposed scheme enables UAVs to track the trajectory generated by a virtual leader and achieve the desired formation configuration, while controlling the position variables of UGVs to converge into a specific convex hull formed by leader UAVs in the horizontal direction. This study fully considers the heterogeneous system parameters and differences in state dimensions between UAVs and UGVs, which pose significant challenges for distributed control design. 
    \item Based on the heterogeneous system dynamic characteristics of leader UAVs and follower UGVs, as well as the communication link failure model, performance functions are appropriately designed to constrain neighborhood formation and ensure that the leader state observation errors satisfy the predefined transient and steady-state performance requirements. 
    \item When considering motion tracking for a single agent, this paper takes into account the practical scenario of actuator saturation and designs a variable prescribed performance boundary control method with an adaptive learning rate. This ensures that multi-agent systems can accomplish formation tracking tasks under conditions closer to real-world applications. 
\end{enumerate}

This paper is organized as follows. Section II provides a preliminary introduction to graph theory and problem formulation. In Section III, an FTC scheme is developed, including the design of a leader state observer that satisfies the prescribed performance requirements and a distributed variable prescribed performance boundary control method. Section IV presents simulation studies to verify the effectiveness of the proposed control method. Section V draws the main conclusions. 

\section{Preliminaries and problem formulation}
In this section, we recall some preliminary results and definitions 
to conduct subsequent analyses. Firstly, the basic graph theory is presented. Secondly, the UAVs-UGVs heterogeneous system and the communication link faults are modeled. Finally, the distributed formation tracking control objective is further presented.

Notation: 
$I_n\in \mathbb{R}^{n \times n}$ denotes the identity matrix. 
$\mathbf{1}_N\in \mathbb{R}^n$ is the vector with all the components being one. 
$\otimes$ is the Kronecker product. 
$A=\left[ a_{ij}\right]$ is a matrix with $a_{ij}$ being the entry in the $i$th row and $j$th column. 
$\mathrm{diag}\{\boldsymbol{v}\}$ is a diagonal matrix with vector $\boldsymbol{v}$ on the main diagonal. 
For $\lambda_i\in \mathbb{C}$ and $A\in \mathbb{R}^{n \times n}$, $\lambda_i$ be the $i$th eigenvalue of $A$ for $i=1,2,\dots,n$; 
$A \succ 0$ ( $A \succcurlyeq 0$ ) means that the matrix $A$ is positive (semi-)definite; 
$A \prec 0$ ( $A \preccurlyeq 0$ ) means that the matrix $A$ is negative (semi-)definite; 
$\lambda_{min}(A)$ and $\lambda_{max}(A)$ denote the minimum and maximum eigenvalue of the matrix $A$  respectively.

\subsection{Graph theory}

The notation $\mathcal{G}\triangleq(\mathcal{V}, \mathcal{E}, \mathcal{A})$  is defined as a directed graph, where $\mathcal{V}\triangleq\{\boldsymbol{v}_1,\boldsymbol{v}_2,\dots,\boldsymbol{v}_N\}$ is a set of nodes, $\mathcal{E}\subseteq\mathcal{V}\times\mathcal{V}$ a set of edges, and $\mathcal{A}=\left[a_{ij}\right]\in\mathbb{R}^{N\times N}$ an adjacency matrix. $a_{ij}\geq0$ indicates the communication connection weight between node $\boldsymbol{v}_i$ and node $\boldsymbol{v}_j$. If there is a communication link from node $\boldsymbol{v}_j$ to node $\boldsymbol{v}_i$, namely $(\boldsymbol{v}_j, \boldsymbol{v}_i)\in\mathcal{E}$, then $a_{ij}>0$. Otherwise, $a_{ij}=0$. It can be assumed that there are no repeated edges or self-loops. 
Define $\mathcal{N}_i=\{j|(\boldsymbol{v}_j,\boldsymbol{v}_i)\in\mathcal{E}\}$ to be a set of neighbors of node $i$, and $\mathcal{D}=\mathrm{diag}\{d_i\}\in\mathbb{R}^{N\times N}$ be an in-degree matrix with $d_i=\sum_{j\in N_i}a_{ij}$. Thus, the Laplacian matrix $\mathcal{L}$ is given by $\mathcal{L}=\mathcal{D}-\mathcal{A}$. The path from node $\boldsymbol{v}_i$ to node $\boldsymbol{v}_j$ is described as $\{\boldsymbol{v}_i,(\boldsymbol{v}_i,\boldsymbol{v}_{p_1}),(\boldsymbol{v}_{p_1},\boldsymbol{v}_{p_2}),\dots,\boldsymbol{v}_j\}$, where the node $\boldsymbol{v}_{p_l}$ is different from node $\boldsymbol{v}_i$ and node $\boldsymbol{v}_j$. ($l=1,2,\dots$) A directed graph is considered to contain a spanning tree if at least one node exists in this graph, from which a directed path always exists to any other node. 
\subsection{UAV and UGV Model}
\subsubsection{Models of UAVs-UGVs systems}
In this paper, we consider a group of UAVs-UGVs formation heterogeneous systems composed of one virtual leader $\boldsymbol{v}_0$, $N$ follower UAVs $\boldsymbol{v}_{n_i}$($n_i=1,2,\dots, N$) and $M$ follower UGVs $\boldsymbol{v}_{m_i}$($m_i=N+1, N+2,\dots, N+M$). The UAV model is the quadrotor, and the UGV model is the 
two-wheeled mobile robot.
If the virtual leader is a neighbor of node $i$, then an edge $(\boldsymbol{v}_{0}, \boldsymbol{v}_{i})$ exists
with a weighting gain $b_i$ being $1$, otherwise $0$. In addition, the dynamic of the virtual leader agent is given by
\begin{align}
\left\{
\begin{aligned}
    \dot{\zeta}_{x0}^l&=\zeta_{v0}^l\\
    \dot{\zeta}_{v0}^l&=u_0^l
    \end{aligned}
    \right.\label{equ:VL}
\end{align}
where $l=x,y,z$ represents one of the $x,y,z$ dimensions; $\zeta_0^l=\left[\zeta_{x0}^l,\zeta_{v0}^l\right]^\mathsf{T}$ denotes the state of the virtual leader in a certain dimension; $u_0^l$ means the input of the virtual leader.


The specific details are as follows:

\subsubsection{Quadrotor UAV}

Fig. \ref{fig:UAV} exhibits the
model of the $i$th quadrotor UAV, where $i = 1,\dots,N$. The model of a quadrotor UAV is complex and contains many coupling problems. According to \cite{bib29}, the system dynamics of the $i$th UAV can be formulated as:
\begin{align}
\left\{
\begin{aligned}
    \ddot{x}_{pi}^x&=(\cos\phi_i\sin\theta_i\cos\psi_i+\sin\phi_i\sin\psi_i)U_{1i}/m_{ai}\\
    \ddot{x}_{pi}^y&=(\cos\phi_i\sin\theta_i\sin\psi_i-\sin\phi_i\cos\psi_i)U_{1i}/m_{ai}\\
    \ddot{x}_{pi}^z&=(\cos\phi_i\cos\theta_i)U_{1i}/m_{ai}-g\\
    \ddot{\phi}_i&=\dot{\theta}_i\dot{\psi}_i\frac{(I_{yi}-I_{zi})}{I_{xi}}-\frac{I_{ri}}{I_{xi}}\dot{\theta}_i\bar{\omega}
    +\frac{1}{I_{xi}}U_{2i}\\
    \ddot{\theta}_i&=\dot{\phi}_i\dot{\psi}_i\frac{(I_{zi}-I_{xi})}{I_{yi}}-\frac{I_{ri}}{I_{yi}}\dot{\phi}_i\bar{\omega}_i
    +\frac{1}{I_{yi}}U_{3i}\\
    \ddot{\psi}_i&=\dot{\phi}_i\dot{\theta}_i\frac{(I_{xi}-I_{yi})}{I_{zi}}
    +\frac{1}{I_{zi}}U_{4i}
    \end{aligned}
    \right.\label{equ:UAV}
\end{align}
where $\chi_i\triangleq\left[x_{pi}^x,x_{pi}^y,x_{pi}^z\right]^\mathsf{T}$ and $\zeta_i\triangleq\left[\phi_i,\theta_i,\psi_i\right]^\mathsf{T}$ denote the positions and the attitude angles of the $i$th UAV; $m_{ai}$ is the mass of the $i$th UAV; $g$ is the gravity constant; $I_{xi}$, $I_{yi}$, and $I_{zi}$ are the moments of inertia; 
$I_{ri}$ represents the inertia of the rotor; $\bar{\omega}_i$ denotes the overall residual rotor angular;  $U_{1i}$, $U_{2i}$, $U_{3i}$, and $U_{4i}$ are four control inputs. ($i = 1,\dots,N$)

The system dynamics \eqref{equ:UAV} can be rewritten as translational dynamics and rotational dynamics:
\begin{align}
    \ddot{\chi}_i&=A_iu_{si}(t)+f_{1i}(\cdot)\label{equ:tr_dy}\\
    \ddot{\zeta}_i&=B_iu_{ri}(t)+f_{2i}(\cdot)\label{equ:ro_dy}
\end{align}
where 
\begin{align*}
&A_i=\frac{1}{m_i}I_3,\:
u_{si}=\left[\begin{array}{c}
      (\cos\phi_i\sin\theta_i\cos\psi_i+\sin\phi_i\sin\psi_i)U_{1i}  \\
      (\cos\phi_i\sin\theta_i\sin\psi_i-\sin\phi_i\cos\psi_i)U_{1i}\\
      (\cos\phi_i\cos\theta_i)U_{1i}
\end{array}\right],\\
&B_i=\mathrm{diag}(\frac{1}{I_{xi}},\frac{1}{I_{yi}},\frac{1}{I_{zi}}),\:\:\:\:
u_{ri}=\left[\begin{array}{c}
      U_{2i}\\
      U_{3i}\\
      U_{4i}
\end{array}\right]
\end{align*}
$f_{1i}$ and $f_{2i}$ are other nonlinear parts.  ($i = 1,\dots,N$)

\begin{figure}
    \centering
    \includegraphics[scale=0.3]{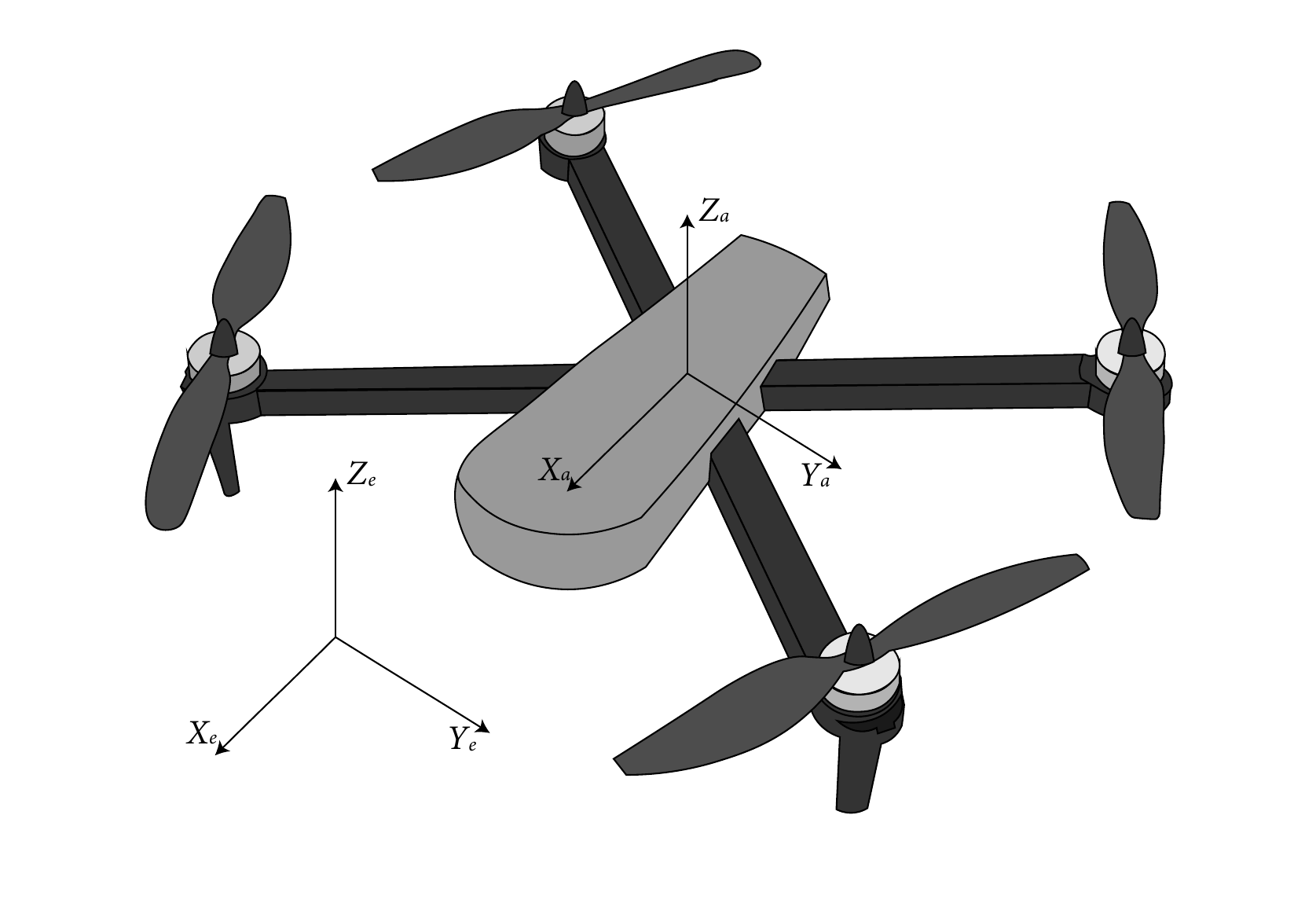}
    \caption{Model of the $i$th UAV.}
    \label{fig:UAV}
\end{figure}
\begin{figure}
    \centering
    \includegraphics[scale=0.5]{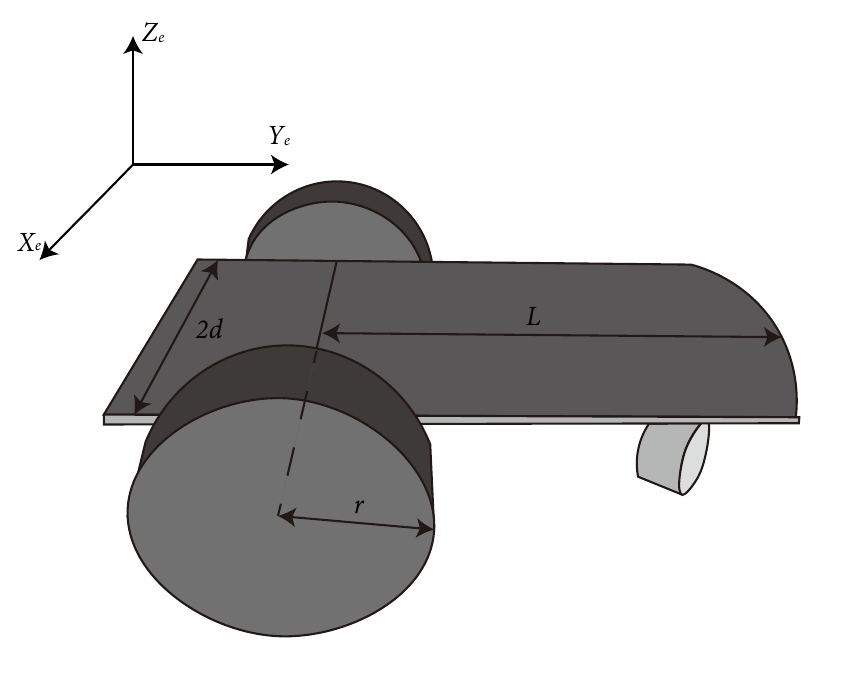}
    \caption{Model of the $i$th UGV.}
    \label{fig:UGV}
\end{figure}

\begin{remark}
\label{rem:1}
    According to \cite{bib29}, the desired roll and pitch references can be generated based on the virtual position controller. By the inner loop control, the attitude of the UAV can be controlled to the desired situation rapidly and accurately. Subsequently, each rotor can obtain its appropriate throttle.
    Thus, the translational dynamics \eqref{equ:tr_dy} can be decoupled to design the subsequent formation control.
\end{remark}
\subsubsection{Two-Wheel Driven UGV Model}
Fig. \ref{fig:UGV} represents the $i$th two-wheel driven UGV model, where $i = N+1,\dots,N+M$. In this model, $L_{ri}$ is the offset between the center of mass of UGV from the axle of the wheel.
The dynamical model of the $i$th UGV can be formulated as:
\begin{align}\left\{\begin{aligned}
\dot{x}_{pi}^x&=v_i\cos \theta_i-L_{ri}\omega\sin\theta_i\\
\dot{x}_{pi}^y&=v_i\sin \theta_i+L_{ri}\omega\cos\theta_i\\
\dot{\theta}_i&=\omega_i\\
\dot{v}_i&=(T_{1i}+T_{2i})/(m_{gi}r_i)\\
\dot{\omega}_i&=(T_{1i}-T_{2i})d_i/(J_{gi}r_i)
\end{aligned}\right.\label{equ:UGV}\end{align}
where $p_i\triangleq\left[x_{pi}^x,x_{pi}^y\right]^\mathsf{T}$ and $\theta_i$ represent respectively the position and the direction of the $i$th UGV; $v_i$ represents the linear velocity; and $\omega_i$ represents the angular velocity; $m_{gi}$ is the mass of the $i$th UGV; $J_{gi}$ is the moment of inertia of the $i$th UGV; $r_i$ denotes the radius of the wheels; $d_i$ denotes half of the distance between two wheels; $T_{1i}$ and $T_{2i}$ denote
the torques applied to the right and left motors. ($i = N+1,\dots,N+M$)

From \eqref{equ:UGV}, it can be further obtained that
\begin{align}
    \ddot{x}_{pi}^x&=(\dot{v}_i-L_{ri}\omega^2)\cos \theta_i-(L_{ri}\dot{\omega}_i+v_i\omega_i)\sin\theta_i\\
    \ddot{x}_{pi}^y&=(\dot{v}_i-L_{ri}\omega^2)\sin\theta_i+(L_{ri}\dot{\omega}_i+v_i\omega_i)\cos\theta_i
\end{align}
\begin{remark}
    The UGV's linear acceleration $\dot{v}$ and angular acceleration $\dot{\omega}$ can be obtained by the output torque of the motor and the radius of the wheels. Thus, adopted similar to the methods mentioned in \textit{Remark} \ref{rem:1}, the dynamics of UGV can also be decoupled.
\end{remark}

\subsection{Problem formulation}

\subsubsection{Communication fault model}
In practice, task environment or equipment limitations can hinder information exchange over communication networks. The communication channel may have a less-than-optimal transmission quality. In this article, the faults in communication links $a_{ij}^f$ and  $b_{i}^f$ can be modeled as follows.
\begin{align}\left\{\begin{aligned}
a_{ij}^f(t)&=a_{ij}+\Delta a_{ij}(t)\\
b_{i}^f(t)&=b_{i}+\Delta b_{i}(t)
\end{aligned}\right.\label{equ:faultmodel}\end{align}
where $a_{ij}$ and $b_i$ are the weights in communication links without faults; $\Delta a_{ij}$ and $\Delta b_i$ denote the corrupted weights caused by communications faults; $a_{ij}^f$ and $b_i^f$ denote corrected weights under communication links faults. ($i,j=1,\dots, M+N$) Uncertainty in a system poses a challenge for Multi-Agent Systems (MAS), particularly in cases where the system is represented as a directed graph. This uncertainty can further complicate the control mechanisms employed by the MAS.

In the case of communication link faults \eqref{equ:faultmodel}, the Laplacian matrix is described as $\mathcal{L}^f=\mathcal{D}^f-\mathcal{A}^f$, with adjacency matrix $\mathcal{A}^f=\left[a_{ij}^f\right]\in\mathbb{R}^{N\times N}$ and the in-degree matrix $\mathcal{D}^f=\mathrm{diag}\{d_i^f\}\in\mathbb{R}^{N\times N}$, where the diagonal element $d_i^f=\sum_{j\in N_i}a_{ij}^f+b_i^f$. To facilitate the distributed formation algorithm design, the following assumptions about the communication faults hold.
\begin{assumption}
\label{assumption:1}
The directed graph $\mathcal{G}$ contains a spanning tree with the leader as its root.
\end{assumption}
\begin{assumption}\label{assumption:2}
The communication link faults $\Delta a_{ij}(t)$ and $\Delta b_{i}(t)$ in the directed graph, as well as their derivatives, are bounded. 
\end{assumption}
\begin{assumption}\label{assumption:3}
The sign of $a_{ij}^f(t)$ and $b_i^f$ are the same as that of $a_{ij}$ and $b_i$, respectively\cite{bib22}. 
\end{assumption}
\begin{remark}
    Assumption \ref{assumption:2} generalizes the fault model proposed in \cite{bib22} by allowing it to be a time-varying system explicitly dependent on time. In \cite{bib22}, the consensus control problem was studied under the assumption of undirected communication graphs. In contrast, this work extends the existing results by addressing the synchronization problem in the presence of communication link faults within a directed graph framework. As noted in the conclusion of \cite{bib22}, such an extension is highly nontrivial. The primary challenge arises from the strong coupling between communication link faults and the Laplacian matrix, which renders the control of directed graphs infeasible when directly applying the methods from \cite{bib22}.
\end{remark}
\begin{remark}
Assumption \ref{assumption:3} guarantees that the graph's connectivity remains consistent with that of the original static graph, even after a communication failure.
\end{remark}
\subsection{Control objectives}
The main objective of this article is firstly to design a distributed fault-tolerant virtual leader state observer for each follower UAV and UGV, so that the leader state estimation error $\widetilde{\zeta}_i\triangleq\zeta_i^l-\zeta_0^l$ converge to predefined sufficiently small residual sets, with convergence rates no less than certain preset values. As it is stared in \cite{bib21}, the prescribed error bounds can be satisfied by guaranteeing
\begin{align}\label{equ:obj}
-\rho_i(t)<\widetilde{\zeta}_i<\rho_i(t)
\end{align}
for all $t\geq0$, where $\rho_i(t)$ is the function that describes the prescribed performance boundary, which satisfies the following properties. $\rho_i(t)$: $\left[0,+\infty\right)\longrightarrow\left(0,+\infty\right)$ are smooth, bounded, decreasing functions satisfying $\lim_{t\rightarrow\infty}\rho_i(t)=\rho_{i\infty}>0$, called performance functions\cite{bib21}.

Last but not least, design a distributed formation control protocol for the heterogeneous UAVs–UGVs collaborative systems, so that the local state synchronization error $\epsilon_i^l$ converges to a sufficiently small neighborhood of zero before the preset convergence time. where $\epsilon_i^l\triangleq	x_i^l-h_i^l-\zeta_0^l$; $x_i^l\triangleq	\left[x_{pi}^l, x_{vi}^l\right]^\mathsf{T}\in \mathbb{R}^2$ is the system state;  $h_i^l\triangleq	\left[h_{pi}^l, h_{vi}^l\right] \in \mathbb{R}^2$ is the formation structure information.
All of the above, one has $i=1,\dots,N+M$; $l=x,y,z$ when $i=1,\dots,N$ and $l=x,y$ when $i=N+1,\dots,N+M$.

\section{Main Results}
This section first proposes fault-tolerant distributed leader state observers with prescribed performance. Furthermore, this article completes the design of distributed controllers for UAVs and UGVs, with prescribed performance similarly.

\subsection{Prescribed Performance Fault-Tolerant distributed leader state observers}
To introduce the prescribed performance bounds in our analysis, we incorporate an output error transformation, first proposed in \cite{bib21}, capable of transforming the original nonlinear system with the constrained in the sense of \eqref{equ:obj} tracking error behavior, into an unconstrained one. More specifically we define $\varepsilon_i=S_i(\frac{\xi_{pi}}{\rho_i})$, where $\varepsilon_i$ is the transformed errors. Furthermore, $S_i(\cdot)$ satisfies the following properties: $S_i(\cdot)$: $\left(-1,+1\right)\longrightarrow\left(-\infty,+\infty\right)$ are smooth, strictly increasing and invertible function, satisfying $\lim_{\frac{\xi_{pi}}{\rho_i}\rightarrow-1}S_i(\frac{\xi_{pi}}{\rho_i})=-\infty$ and $\lim_{\frac{\xi_{pi}}{\rho_i}\rightarrow1}S_i(\frac{\xi_{pi}}{\rho_i})=\infty$.

In this article, we adopt
\begin{align}\label{equ:transformation}
\varepsilon_i=S_i(\frac{\xi_{pi}}{\rho_i})
=\frac{1}{2}\ln(\frac{1+\frac{\xi_{pi}}{\rho_i}}{1-\frac{\xi_{pi}}{\rho_i}}),
\hspace{2em} i=1,\dots,N+M
\end{align}

It is important to notice that the following inequality was established
\begin{align}\label{inequ:2}
\left|\frac{\xi_{pi}}{\rho_i}\right|\leq\left|\varepsilon_i\right|
\end{align}
From \eqref{inequ:2}, we can get
\begin{align}\label{inequ:3}
\left|\xi_{pi}\right|\leq\overline{\rho}\left|\varepsilon_i\right|
\end{align}
where $\overline{\rho}=\max\rho(t)$.

Differentiate \eqref{equ:transformation} concerning time, we can get
\begin{align}\label{equ:dot_transformation}
\dot{\varepsilon}_i=r_i(\dot{\xi}_{pi}-\gamma_i\xi_{pi})
\end{align}
where 
\begin{align*}
\left\{\begin{aligned}
r_i&=(\frac{\partial S_i}{\partial \frac{\xi_{pi}}{\rho_i}})(\frac{1}{\rho_i})=\frac{1}{\rho_i(1+\frac{\xi_{pi}}{\rho_i})(1-\frac{\xi_{pi}}{\rho_i})}\\
\gamma_i&=\frac{\dot{\rho_i}}{\rho_i}
\end{aligned}\right.
\end{align*}

The global form of \eqref{equ:dot_transformation} is
\begin{align}\label{equ:dot_transformation_gl}
\dot{\varepsilon}=R(\dot{\xi}_{p}-\Gamma\xi_{p})
\end{align}
where $\varepsilon=\left[\varepsilon_1,\dots,\varepsilon_{M+N}\right]^\mathsf{T}$; $R=\mathrm{diag}(r_{1},\dots,r_{M+N})$; $\Gamma=\mathrm{diag}(\gamma_{1},\dots,\gamma_{M+N})$.

In this subsection, we design distributed leader state observers for UAVs \eqref{equ:UAV} and UGVs \eqref{equ:UGV}. To this end, we first analyze the communication link fault model in the directed graph and derive some structural properties for control purposes.
Due to the similarity of the three dimensions in the design of the leader state observer, a certain dimension is selected for explanation. For convenience, the superscripts are omitted. To achieve control objectives, each agent is assigned a distributed leader state observer called $\zeta_i\in \mathbb{R}^2$. However, due to the uncertainty of communication, the local agent's distributed leader state error $\xi_i$ is mistakenly defined as
\begin{align}\label{equ:leaderstateerror}
\xi_i=\sum_{j=1}^{M+N}a_{ij}^f(t)(\zeta_i-\zeta_j)+g_i^f(t)(\zeta_i-\zeta_0)
\end{align}
where $\xi_i=\left[\xi_{pi},\xi_{vi}\right]^\mathsf{T}$,and its global form is rewritten as $\xi=\left[\xi_1^\mathsf{T},\dots,\xi_{N+M}^\mathsf{T}\right]^\mathsf{T}$.
\begin{assumption}\label{assumption:4}
The distributed leader state error $\xi_i$ is assumed to be measurable for the control design of each local agent.
\end{assumption}

Correspondingly, we define the global leader state estimation error as $\widetilde{\zeta}\triangleq\left[\widetilde{\zeta}_1^\mathsf{T},\dots,\widetilde{\zeta}_{M+N}^\mathsf{T}\right]^\mathsf{T}=\left[\zeta_1^\mathsf{T}-\zeta_0^\mathsf{T},\dots,\zeta_{M+N}^\mathsf{T}-\zeta_0^\mathsf{T}\right]^\mathsf{T}$. Thus, from \eqref{equ:leaderstateerror} the distributed leader state error can be further expressed as
\begin{align}\label{equ:leaderstateerror_gl}
\xi=(\mathcal{L}^f\otimes I_2)\widetilde{\zeta}
\end{align}
From assumption \ref{assumption:1}, $\mathcal{L}^f$ is a non-singular matrix. Therefore,
\begin{align}
\left\{\begin{aligned}
\|\widetilde{\zeta}_p\|&\leq\frac{\|\xi_p\|}{\lambda_{\mathrm{min}}(\mathcal{L}^f)}\\
\|\widetilde{\zeta}_v\|&\leq\frac{\|\xi_v\|}{\lambda_{\mathrm{min}}(\mathcal{L}^f)}
\end{aligned}\right.\label{inequ:1}
\end{align}

The distributed leader state observer is chosen as
\begin{align}
\left\{\begin{aligned}
\dot{\zeta}_{pi}&=\zeta_{vi}+\bar{\alpha}_{1i}\\
\dot{\bar{\alpha}}_{1i}&=\frac{\alpha_{1i}-\bar{\alpha}_{1i}}{\varsigma_{1}}\\
\dot{\zeta}_{vi}&=\bar{\alpha}_{2i}\\
\dot{\bar{\alpha}}_{2i}&=\frac{\alpha_{2i}-\bar{\alpha}_{2i}}{\varsigma_{2}}\\
\end{aligned}\right.\label{equ:observer}
\end{align}
where $\varsigma_{1i}$ and $\varsigma_{2i}$ represent the time constants; $\alpha_{1i}$ and $\alpha_{2i}$ are the virtual control law, which satisfy the following equations
\begin{align}\label{equ:alpha}
\left\{\begin{aligned}
\alpha_{1i}&=-k_{1i}r_i\varepsilon_i+\gamma_i\mathcal{L}^{-1}\xi_{pi}\\
\alpha_{2i}&=-\eta_{\zeta_vi}k_{2i}\xi_{vi}-\eta_{\zeta_vi}\mathcal{L}^\mathsf{T}r_ip_i\varepsilon_i
\end{aligned}\right.
\end{align}
where $k_{1i}$, $k_{2i}$, $\eta_{\zeta_vi}$, $p_i$ are designed constants. If the above is satisfied, then all the
signals in the distributed leader state observers \eqref{equ:observer} are globally
bounded. Moreover, all the estimated leader states, $\zeta_{pi}$, for $i=1,\dots, M+N$, converge to the virtual leader state $\zeta_{p0}$.

Let $\widetilde{\alpha}_{1i}=\bar{\alpha}_{1i}-\alpha_{1i}$; $\widetilde{\alpha}_{2i}=\bar{\alpha}_{2i}-\alpha_{2i}$, the global form of \eqref{equ:observer} is 
\begin{align}
\left\{\begin{aligned}
\dot{\zeta}_{p}&=\zeta_{v}+\alpha_{1}+\widetilde{\alpha}_1\\
\dot{\bar{\alpha}}_{1}&=-\Sigma_{1}^{-1}\widetilde{\alpha}_1\\
\dot{\zeta}_{v}&=\alpha_{2}+\widetilde{\alpha}_2\\
\dot{\bar{\alpha}}_{2}&=-\Sigma_{2}^{-1}\widetilde{\alpha}_2
\end{aligned}\right.\label{equ:observer_gl}
\end{align}
where $\alpha_{1}=-K_{1}R\varepsilon+\Gamma\mathcal{L}^{-1}\xi_{p}$;  $\alpha_{2}=-H_{\zeta_v}K_{2}\xi_{v}-H_{\zeta_v}\mathcal{L}^\mathsf{T}RP\varepsilon$; $K_1=\mathrm{diag}(k_{1i})$; $H_{\zeta_v}=\mathrm{diag}(\eta_{\zeta_vi})$; $K_2=\mathrm{diag}(k_{2i})$; $P=\mathrm{diag}(p_{i})$; $\Sigma_1=\mathrm{diag}(\varsigma_{1i})$; $\Sigma_2=\mathrm{diag}(\varsigma_{2i})$; $\widetilde{\alpha}_1=\left[\widetilde{\alpha}_{11},\dots,\widetilde{\alpha}_{1(M+N)}\right]^\mathsf{T}$; $\widetilde{\alpha}_2=\left[\widetilde{\alpha}_{21},\dots,\widetilde{\alpha}_{2(M+N)}\right]^\mathsf{T}$; $\bar{\alpha}_1=\left[\bar{\alpha}_{11},\dots,\bar{\alpha}_{1(M+N)}\right]^\mathsf{T}$; $\bar{\alpha}_2=\left[\bar{\alpha}_{21},\dots,\bar{\alpha}_{2(M+N)}\right]^\mathsf{T}$.

 Based on the above conditions, we are now ready to give our first result on distributed observers against communication link faults with prescribed performance control.
\begin{theorem}
      Suppose that Assumptions \ref{assumption:1}-\ref{assumption:4} hold. If the distributed leader state observer is chosen as \eqref{equ:observer}, then all the estimated leader state errors $\xi$  satisfy the prescribed performance within a sufficiently small neighborhood converging to zero. 
\end{theorem}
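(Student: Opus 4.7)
The plan is to reduce the prescribed-performance requirement \eqref{equ:obj} to a uniform boundedness question for the transformed error $\varepsilon$. By construction of $S_i$ in \eqref{equ:transformation}, $\varepsilon_i(t)$ is finite at time $t$ if and only if $|\xi_{pi}(t)/\rho_i(t)|<1$; hence if the closed-loop dynamics keeps $\varepsilon$ uniformly bounded for all $t\ge 0$, the performance envelope is preserved automatically, and the ultimate bound on $\|\varepsilon\|$ directly quantifies the residual set mentioned in the statement. I would therefore take the composite Lyapunov function
\begin{equation*}
V=\tfrac{1}{2}\varepsilon^{\mathsf{T}}\varepsilon+\tfrac{1}{2}\xi_v^{\mathsf{T}}\xi_v+\tfrac{1}{2}\widetilde{\alpha}_1^{\mathsf{T}}\widetilde{\alpha}_1+\tfrac{1}{2}\widetilde{\alpha}_2^{\mathsf{T}}\widetilde{\alpha}_2
\end{equation*}
and aim at a differential inequality of the form $\dot{V}\le -cV+d$ along trajectories of \eqref{equ:observer_gl}, for some $c>0$ and some $d>0$ that can be made arbitrarily small by design.

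The calculation splits naturally into three blocks. First, using \eqref{equ:dot_transformation_gl} together with $\xi_p=(\mathcal{L}^f\otimes I)\widetilde{\zeta}_p$ and the virtual-leader model \eqref{equ:VL}, I would write $\dot{\xi}_p=\dot{\mathcal{L}}^f\widetilde{\zeta}_p+\xi_v+\mathcal{L}^f(\alpha_1+\widetilde{\alpha}_1)$; after substituting $\alpha_1$ from \eqref{equ:alpha}, the residual $\mathcal{L}^f\Gamma\mathcal{L}^{-1}\xi_p-\Gamma\xi_p$ that arises from the fault-induced mismatch between $\mathcal{L}^f$ and $\mathcal{L}$ can be absorbed as a bounded perturbation thanks to Assumption \ref{assumption:2} and the bound \eqref{inequ:1}. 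Second, the parallel manipulation on $\dot{\xi}_v=\dot{\mathcal{L}}^f\widetilde{\zeta}_v+\mathcal{L}^f(\alpha_2+\widetilde{\alpha}_2)$ together with the choice of $\alpha_2$ produces, by design, the skew cross-term needed to cancel the coupling $\varepsilon^{\mathsf{T}}R\mathcal{L}^f\xi_v$ against $\xi_v^{\mathsf{T}}\mathcal{L}^{f\mathsf{T}}RP\varepsilon$, provided the weight $P=\mathrm{diag}(p_i)$ is tuned accordingly. Third, the first-order filter errors obey $\dot{\widetilde{\alpha}}_k=-\Sigma_k^{-1}\widetilde{\alpha}_k-\dot{\alpha}_k$ for $k=1,2$, contributing a damping $-\|\widetilde{\alpha}_k\|^2/\varsigma_k$ together with a forcing $\dot{\alpha}_k$ that depends continuously on signals already shown to lie in a compact invariant set; Young's inequality then yields a standard dynamic-surface-control estimate.

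Putting the three blocks together and choosing the gains $k_{1i},k_{2i},\eta_{\zeta_v i},p_i$ large enough (and the filter constants $\varsigma_{1i},\varsigma_{2i}$ small enough to dominate the fault- and filter-induced perturbations), the target inequality $\dot{V}\le -cV+d$ follows, whence the comparison lemma gives $V(t)\le V(0)e^{-ct}+d/c$, uniform boundedness of $\varepsilon$, and, via \eqref{inequ:2}, the prescribed bound $|\xi_{pi}(t)|<\rho_i(t)$ for all $t\ge 0$. The hard part, in my view, will be the second block: because $\mathcal{L}^f$ appears multiplied by $\Gamma\mathcal{L}^{-1}$ in $\alpha_1$ and by $\mathcal{L}^{\mathsf{T}}$ in $\alpha_2$, and none of these matrices commute in general, verifying that the designed cross-terms genuinely cancel (rather than being merely bounded) requires Assumption \ref{assumption:3} to fix the sign of $\mathcal{L}^f$ and a \emph{uniform-in-$t$} lower bound on $\lambda_{\min}(\mathcal{L}^f+\mathcal{L}^{f\mathsf{T}})$, which must be extracted from Assumptions \ref{assumption:1}--\ref{assumption:3}.
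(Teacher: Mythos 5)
Your overall skeleton coincides with the paper's: transform the error via \eqref{equ:transformation}, treat the first-order filter errors $\widetilde{\alpha}_1,\widetilde{\alpha}_2$ in dynamic-surface fashion, build a composite Lyapunov function, obtain $\dot V\le -cV+d$, and read the prescribed-performance claim off the boundedness of $\varepsilon$. The genuine gap is in your second block, and you have half-diagnosed it yourself. By putting $\tfrac12\xi_v^{\mathsf{T}}\xi_v$ in $V$, the contribution of $\alpha_2$ becomes $\xi_v^{\mathsf{T}}\mathcal{L}^f\alpha_2=-\xi_v^{\mathsf{T}}\mathcal{L}^fH_{\zeta_v}K_2\xi_v-\xi_v^{\mathsf{T}}\mathcal{L}^fH_{\zeta_v}\mathcal{L}^{\mathsf{T}}RP\varepsilon$, while the coupling left in $\dot V_\varepsilon$ is $\varepsilon^{\mathsf{T}}PR\xi_v$ (since $\xi_v=\mathcal{L}^f\widetilde{\zeta}_v$); these cancel only if $\mathcal{L}^fH_{\zeta_v}\mathcal{L}^{\mathsf{T}}$ were (a multiple of) the identity, and no diagonal tuning of $P$ repairs that. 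Your fallback, a uniform-in-$t$ positive lower bound on $\lambda_{\min}(\mathcal{L}^f+\mathcal{L}^{f\mathsf{T}})$ "extracted from Assumptions \ref{assumption:1}--\ref{assumption:3}," is not available: for a directed graph the symmetric part of the pinned Laplacian need not be positive definite, and even if it were, it would bound the mismatch rather than cancel it. The paper avoids the problem entirely by a different choice of velocity coordinate and weighting: the block is $\tfrac12\widetilde{\zeta}_v^{\mathsf{T}}H_{\zeta_v}^{-1}\widetilde{\zeta}_v$ (leader-velocity estimation error, not the neighborhood error $\xi_v$), so that $\widetilde{\zeta}_v^{\mathsf{T}}H_{\zeta_v}^{-1}\alpha_2=-\widetilde{\zeta}_v^{\mathsf{T}}K_2\xi_v-\widetilde{\zeta}_v^{\mathsf{T}}\mathcal{L}^{\mathsf{T}}RP\varepsilon$, and the second term is literally the scalar transpose of the residual cross term $\varepsilon^{\mathsf{T}}PR\mathcal{L}\widetilde{\zeta}_v$ in $\dot V_\varepsilon$ (after splitting $\mathcal{L}^f=\mathcal{L}+\Delta\mathcal{L}$ and disposing of all $\Delta\mathcal{L}$- and $\dot{\mathcal{L}}^f$-pieces by Young's inequality under Assumption \ref{assumption:2}); the cancellation is exact, with no commutation requirement and no symmetric-part eigenvalue condition. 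In short, $\alpha_2$ in \eqref{equ:alpha} is designed for that specific Lyapunov weighting, and your coordinates do not recover its effect.

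Two smaller inaccuracies. First, calling the mismatch $\mathcal{L}^f\Gamma\mathcal{L}^{-1}\xi_p-\Gamma\xi_p$ a "bounded perturbation" is not right: it is proportional to $\xi_p$, hence to $\varepsilon$ via \eqref{inequ:3}, so it must be dominated by the $-k_1$ damping (this is exactly the paper's condition $\beta_1>0$); treated as an additive bounded disturbance it would destroy the negative-definite $\varepsilon$-quadratic rather than merely enlarge the residual set. Second, your expression for the velocity-error dynamics omits the leader input (the $-\mathbf{1}u_0$ term in $\dot{\widetilde{\zeta}}_v$, or $-\mathcal{L}^f\mathbf{1}u_0$ in $\dot\xi_v$); in the paper this is absorbed by Young's inequality into the constant $\sigma_1$ using the bound $\overline u_0$, so it affects the residual set, not the decay rate — it should appear explicitly in your estimate of $d$.
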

\begin{proof}
Consider the following Lyapunov candidate function
\begin{align}
V_\varepsilon=\frac{1}{2}\varepsilon^\mathsf{T} P \varepsilon
\end{align}

The dynamics of $V_\varepsilon$ is
\begin{align}
\begin{split}
\dot{V}_\varepsilon
&=\varepsilon^\mathsf{T} P R(
\dot{\mathcal{L}^f}\widetilde{\zeta}_p
+\mathcal{L}^f\dot{\widetilde{\zeta}}_p
-\Gamma\mathcal{L}^f\widetilde{\zeta}_p)
\\
&=\varepsilon^\mathsf{T} P R(
\dot{\mathcal{L}^f}\widetilde{\zeta}_p
+\mathcal{L}^f(\zeta_{v}+-K_{1}R\varepsilon+\Gamma\mathcal{L}^{-1}\xi_{p}+\widetilde{\alpha}_1\\
&\quad-\mathbf{1}_{N+M}\cdot\zeta_{v0})
-\Gamma\mathcal{L}^f\widetilde{\zeta}_p)
\\
\end{split}\label{inequ:V_epl}
\end{align}

From \eqref{inequ:3} and \eqref{inequ:1} 
\begin{align}
\begin{split}
\varepsilon^\mathsf{T} P R\dot{\mathcal{L}^f}\widetilde{\zeta}_p
&\leq\overline{p}\overline{r}\overline{\rho}\frac{\lambda_1}{\lambda_0}
\cdot\varepsilon^\mathsf{T}\varepsilon
\end{split}\label{inequ:epl1}
\end{align}
\begin{align}
\begin{split}
\varepsilon^\mathsf{T} P R\Delta\mathcal{L}\Gamma\mathcal{L}^{-1}\xi_{p}
&\leq\overline{p}\overline{r}\overline{\gamma}\overline{\rho}\lambda_2\lambda_3
\cdot\varepsilon^\mathsf{T}\varepsilon\\
\end{split}\label{inequ:epl2}
\end{align}
\begin{align}
\begin{split}
-\varepsilon^\mathsf{T} P R\mathcal{L}^f K_{1}R\varepsilon
&\leq -\underline {p} \underline {r}^2 k_1\lambda_0
\cdot\varepsilon^\mathsf{T}\varepsilon
\end{split}\label{inequ:epl3}
\end{align}
where $\lambda_0=\lambda_{\min}(\mathcal{L}^f)$, $\lambda_1=\max\|\dot{\mathcal{L}^f}\|_F$, $\lambda_2=\max\|\Delta\mathcal{L}\|_F$, $\lambda_3=\max\|\mathcal{L}^{-1}\|_F$, $\overline{p}=\max p_i$, $\overline{r}=\max r_i$, $\overline{\gamma}=\max \gamma_i$, $\underline{p}=\min p_i$, $\underline{r}=\min r_i$.

Using Young’s inequality, it yields
\begin{align}
\begin{split}
\varepsilon^\mathsf{T} P R\Delta\mathcal{L}\widetilde{\zeta}_{v}
&\leq c_1\overline{p}^2\overline{r}^2\lambda_2^2
\cdot\varepsilon^\mathsf{T}\varepsilon
+\frac{1}{4c_1}
\cdot\widetilde{\zeta}_v^\mathsf{T}\widetilde{\zeta}_v
\end{split}\label{inequ:epl4}
\end{align}
where $c_1$ is design positive constant.

Substituting \eqref{inequ:epl1}-\eqref{inequ:epl4} into \eqref{inequ:V_epl}, we can obtain
\begin{align}
\begin{split}
\dot{V}_\varepsilon
&\leq
-(\underline {p} \underline {r}^2 k_1\lambda_0-\overline{p}\overline{r}\overline{\rho}\frac{\lambda_1}{\lambda_0}-\overline{p}\overline{r}\overline{\gamma}\overline{\rho}\lambda_2\lambda_3-c_1\overline{p}^2\overline{r}^2\lambda_2^2)\cdot\\
&\quad\varepsilon^\mathsf{T}\varepsilon
+\frac{1}{4c_1}
\cdot\widetilde{\zeta}_v^\mathsf{T}\widetilde{\zeta}_v
+\varepsilon^\mathsf{T} P R\mathcal{L}\widetilde{\zeta}_{v}
+\varepsilon^\mathsf{T} P R\mathcal{L}^f\widetilde{\alpha}_1
\end{split}\label{inequ:V_epl2}
\end{align}

Consider the Lyapunov candidate function as
\begin{align}\label{equ:V_1}
V_{1}
=V_\varepsilon+\frac{1}{2}\widetilde{\zeta}_v^\mathsf{T} H_{\zeta_v}^{-1 }\widetilde{\zeta}_v
+\frac{1}{2}\widetilde{\alpha}_1^\mathsf{T} \widetilde{\alpha}_1
+\frac{1}{2}\widetilde{\alpha}_2^\mathsf{T} \widetilde{\alpha}_2
\end{align}

The derivative of \eqref{equ:V_1} is
\begin{align}
\begin{split}
\dot{V}_{1}
&=\dot{V}_\varepsilon+\widetilde{\zeta}_v^\mathsf{T} H_{\zeta_v}^{-1} (\dot{\zeta}_{v}-\mathbf{1}_{N+M}\dot{\zeta}_{v0})
+\widetilde{\alpha}_1^\mathsf{T} \dot{\widetilde{\alpha}}_1
+\widetilde{\alpha}_2^\mathsf{T} \dot{\widetilde{\alpha}}_2\\
&=\dot{V}_\varepsilon+\widetilde{\zeta}_v^\mathsf{T} H_{\zeta_v}^{-1} (-H_{\zeta_v}K_{2}\xi_{v}-H_{\zeta_v}\mathcal{L}^\mathsf{T}RP\varepsilon+\widetilde{\alpha}_2\\
&\quad -\mathbf{1}_{N+M}\dot{\zeta}_{v0})
-\widetilde{\alpha}_1^\mathsf{T}\Sigma_{1}^{-1}\widetilde{\alpha}_1-\widetilde{\alpha}_1^\mathsf{T}\dot{\alpha}_1\\
&\quad-\widetilde{\alpha}_2^\mathsf{T}\Sigma_{2}^{-1}\widetilde{\alpha}_2-\widetilde{\alpha}_2^\mathsf{T}\dot{\alpha}_2
\end{split}\label{equ:V_1_dot}
\end{align}

Using Young’s inequality, it yields
\begin{align}
 -\widetilde{\zeta}_v^\mathsf{T} H_{\zeta_v}^{-1} \mathbf{1}_{N+M}\dot{\zeta}_{v0}
 &\leq \frac{c_2}{\underline{\eta}^2}\cdot\widetilde{\zeta}_v^\mathsf{T}\widetilde{\zeta}_v+\frac{1}{4c_2}\overline{u}_0^2
\end{align}
\begin{align}
\begin{split}
\varepsilon^\mathsf{T} P R\mathcal{L}^f\widetilde{\alpha}_1
&\leq c_{\alpha1}\overline{p}^2\overline{r}^2\lambda_4^2
\cdot\varepsilon^\mathsf{T}\varepsilon
+ \frac{1}{4c_{\alpha1}}
\cdot \widetilde{\alpha}_1^\mathsf{T}\widetilde{\alpha}_1
\end{split}
\end{align}
\begin{align}
\widetilde{\zeta}_v^\mathsf{T} H_{\zeta_v}^{-1} \widetilde{\alpha}_2
 &\leq \frac{c_{\alpha2}}{\underline{\eta}^2}\cdot\widetilde{\zeta}_v^\mathsf{T}\widetilde{\zeta}_v+\frac{1}{4c_{\alpha2}}\widetilde{\alpha}_2^\mathsf{T}\widetilde{\alpha}_2
\end{align}
\begin{align}
-\widetilde{\alpha}_1^\mathsf{T}\dot{\alpha}_1
\leq \chi_1 \cdot \widetilde{\alpha}_1^\mathsf{T}\widetilde{\alpha}_1
+\frac{1}{4\chi_1}\Pi_1^2
\end{align}
\begin{align}
-\widetilde{\alpha}_2^\mathsf{T}\dot{\alpha}_2
\leq \chi_2 \cdot \widetilde{\alpha}_2^\mathsf{T}\widetilde{\alpha}_2
+\frac{1}{4\chi_2}\Pi_2^2
\end{align}
where $\lambda_4=\max\|\mathcal{L}^f\|_F$; $c_2$, $c_{\alpha1}$, $c_{\alpha2}$, $\chi_1$ and $\chi_2$ are design positive constants; $\Pi_1$ and $\Pi_2$ denotes the maximum of $\|\dot{\alpha}_1\|$ and $\|\dot{\alpha}_2\|$ on a compact set $\Phi$, $\Phi=\{(\varepsilon^\mathsf{T} P \varepsilon+\widetilde{\zeta}_v^\mathsf{T} H_{\zeta_v}^{-1 }\widetilde{\zeta}_v
+\widetilde{\alpha}_1^\mathsf{T} \widetilde{\alpha}_1
+\widetilde{\alpha}_2^\mathsf{T} \widetilde{\alpha}_2)\leq\Pi_0\}$ and $\Pi_0>0$.

Then \eqref{equ:V_1_dot} becomes
\begin{align}
\begin{split}
\dot{V}_{1}
&\leq-(\underline {p} \underline {r}^2 k_1\lambda_0-\overline{p}\overline{r}\overline{\rho}\frac{\lambda_1}{\lambda_0}-\overline{p}\overline{r}\overline{\gamma}\overline{\rho}\lambda_2\lambda_3-c_1\overline{p}^2\overline{r}^2\lambda_2^2\\
&\quad-c_{\alpha1}\overline{p}^2\overline{r}^2\lambda_4^2)
\cdot\varepsilon^\mathsf{T}\varepsilon
-(k_2\lambda_0-\frac{1}{4c_1}-\frac{c_2}{\underline{\eta}^2}-\frac{c_{\alpha2}}{\underline{\eta}^2})
\cdot\\
&\quad \widetilde{\zeta}_v^\mathsf{T}\widetilde{\zeta}_v
-(\frac{1}{\overline{\varsigma}_1}- \frac{1}{4c_{\alpha1}}-\chi_1)\cdot \widetilde{\alpha}_1^\mathsf{T}\widetilde{\alpha}_1
-(\frac{1}{\overline{\varsigma}_2}- \frac{1}{4c_{\alpha2}}\\
&\quad-\chi_2)\cdot \widetilde{\alpha}_2^\mathsf{T}\widetilde{\alpha}_2
+\sigma_1\\
&\leq -2\beta \cdot V_1 + \sigma_1
\end{split}\label{V_1_dot}
\end{align}
where $\sigma_1=\frac{1}{4c_2}\overline{u}_0^2+\frac{1}{4\chi_1}\Pi_1^2+\frac{1}{4\chi_2}\Pi_2^2$; $\beta=\min\{\beta_1,\beta_2,\beta_3,\beta_4\}$ with $\beta_1=\underline {p} \underline {r}^2 k_1\lambda_0-\overline{p}\overline{r}\overline{\rho}\frac{\lambda_1}{\lambda_0}-\overline{p}\overline{r}\overline{\gamma}\overline{\rho}\lambda_2\lambda_3-c_1\overline{p}^2\overline{r}^2\lambda_2^2-c_{\alpha1}\overline{p}^2\overline{r}^2\lambda_4^2>0$, $\beta_2=k_2\lambda_0-\frac{1}{4c_1}-\frac{c_2}{\underline{\eta}^2}-\frac{c_{\alpha2}}{\underline{\eta}^2}>0$, $\beta_3=\frac{1}{\overline{\varsigma}_1}- \frac{1}{4c_{\alpha1}}-\chi_1>0$ and $\beta_4=\frac{1}{\overline{\varsigma}_2}- \frac{1}{4c_{\alpha2}}-\chi_2>0$, while $k_1$, $k_2$, $\varsigma_{1i}$ and $\varsigma_{2i}$ are appropriately selected .

Inequality \eqref{V_1_dot} implies that
\begin{align}
\begin{split}
    V_1(t)\leq V_1(0)\exp(-2\beta t)+\nu,
    \forall t\geq0
\end{split}
\end{align}
where $\nu=\frac{\sigma_1}{2\beta}$. As $t$ tends to infinity, we have
\begin{align}\label{equ:conclusion}
\begin{split}
        &\|\varepsilon\|\leq\sqrt{\frac{2\nu}{\overline{p}}},    \|\widetilde{\zeta}_v\|\leq\sqrt{2\underline{\eta}\nu},\|\widetilde{\alpha}_1\|\leq\sqrt{2\nu},\\
    &\|\widetilde{\alpha}_2\|\leq\sqrt{2\nu},\|\xi_{p}\|\leq\overline{\rho}\sqrt{\frac{2\nu}{\overline{p}}},\|\widetilde{\zeta}_p\|\leq\frac{\overline{\rho}}{\lambda_1}\sqrt{\frac{2\nu}{\overline{p}}}
\end{split}
\end{align}

Based on the error transformation, by designing the performance function $\rho(t)$ and the parameters, the estimated leader state error $\xi_{i}$ of each agent converges to a small adjustable neighborhood of zero.

This completes the proof.
\end{proof}
\subsection{Main Result for Distributed Controller Design}
In this section, we will introduce a complete solution to the problem of communication link fault recovery. A solution was found by designing a leader state observer into the control of heterogeneous multi-agent systems. Because this article considers more practical issues, the design of distributed controllers will consider the input saturation of actuators. Based on the previous subsection, we have summarized the results as follows.

In a real robotic system, the maximum input provided by the actuator is always limited, thus the saturation of system inputs should also be considered when designing the controller. The input saturation can be described as

\begin{align}
\begin{split}
u_i=\mathrm{sat}(v_i)=\left\{
\begin{array}{ll}
\overline{u} &,\mathrm{if}\, v_i>\overline{u}\\
v_i &,\mathrm{if}\, \underline{u}\leq v_i\leq\overline{u}\\
\underline{u} &,\mathrm{if}\,  v_i<\underline{u}\\
\end{array}
\right.
\end{split}
\end{align}
with $v=\left[v_1,v_2,\dots,v_{M+N}\right]^\mathsf{T}\in \mathbb{R}^{M+N}$ being the control commands given to the joint actuators, $\overline{u}\in\mathbb{R}$
, $\underline{u}\in\mathbb{R}$ the high and low saturation boundaries, respectively.

The UAV model \eqref{equ:UAV} can be simplified as 
\begin{align}
\left\{
\begin{aligned}
    \ddot{x}_{pi}^x&=u_{i}^x\\
    \ddot{x}_{pi}^y&=u_{i}^y\\
    \ddot{x}_{pi}^z&=u_{i}^z\\
    \end{aligned}
    \right.\label{equ:UAV_simple}
\end{align}
where 
\begin{align}
\left\{
\begin{aligned}
    u_{i}^x&=(\cos\phi_i\sin\theta_i\cos\psi_i+\sin\phi_i\sin\psi_i)U_{1i}/m_{ai}\\
   u_{i}^y&=(\cos\phi_i\sin\theta_i\sin\psi_i-\sin\phi_i\cos\psi_i)U_{1i}/m_{ai}\\
   u_{i}^z&=(\cos\phi_i\cos\theta_i)U_{1i}/m_{ai}-g\\
    \end{aligned}
    \right.
\end{align}
In practice, to ensure the safe and controllable flight of drones, we typically restrict the roll and pitch angles to a range of -30° to 30°. Due to Cauchy-Schwartz inequality, then $(\cos\phi_i\sin\theta_i\cos\psi_i+\sin\phi_i\sin\psi_i)^2\leq(\cos^2\phi_i\sin^2\theta_i+\sin^2\phi_i)(\cos^2\psi_i+\sin^2\psi_i)\leq\frac{7}{16}$. Thus, $u_{i}^x$, $u_{i}^y$ and $u_{i}^z$  all have upper and lower bounds, that can be denoted as $\overline{u}_{i}^l$ and  $\underline{u}_{i}^l$, $l=\{x,y,z\}$.
It is the same as the UGV model, since the input torque $T_{1i}$ and $T_{2i}$ are bounded.

The UGV model \eqref{equ:UGV} can be simplified as
\begin{align}
\left\{
\begin{aligned}
    \ddot{x}_{pi}^x&=u_{i}^x\\
    \ddot{x}_{pi}^y&=u_{i}^y\\
    \end{aligned}
    \right.\label{equ:UGV_simple}
\end{align}
where
\begin{align}
\left\{
\begin{aligned}
    u_{i}^x&=(\dot{v}_i-L_{ri}\omega^2)\cos \theta_i-(L_{ri}\dot{\omega}_i+v_i\omega_i)\sin\theta_i\\
   u_{i}^y&=(\dot{v}_i-L_{ri}\omega^2)\sin\theta_i+(L_{ri}\dot{\omega}_i+v_i\omega_i)\cos\theta_i\\
    \end{aligned}
    \right.
\end{align}

The goal of this paper is to control the position of each agent of the UAV-UGV system, so that the local state synchronization error $\epsilon_i^l=x_i^l-h_i^l-\zeta_0^l$ converges to a small neighborhood of zero before the preset convergence time. 

The error dynamics can be obtained as:
\begin{align}
\left\{
\begin{aligned}
\dot{\epsilon}_{pi}^l&=\epsilon_{vi}^l\\
\dot{\epsilon}_{vi}^l&=\ddot{x}_{pi}^l-\ddot{h}_{xi}^l-\ddot{\zeta}_{p0}^l\\
&=u_{i}^l-\dot{h}_{vi}^l-\dot{\zeta}_{vi}+\dot{\widetilde{\zeta}}_{vi}
    \end{aligned}
    \right.\label{equ:error_dynamics}
\end{align}
As is well known, by using prescribed  performance control, it is possible to ensure that the tracking error converges to any small residual set, with a convergence speed not less than the prescribed value, and exhibits a maximum overshoot that is less than a sufficiently small prescribed  constant\cite{bib21}. Therefore, this article also adopted PPC to achieve excellent transient and steady-state tracking performance in robot systems. As described in \cite{bib28}, the predefined boundary of tracking error can be designed as follows:
\begin{align}
    -\underline{\delta}_i(t)\rho_{\epsilon i}(t)<\epsilon_{pi}<\overline{\delta}_i(t)\rho_{\epsilon i}(t)
\end{align}
where $\rho_i(t)$ is the prescribed performance function, and $\underline{\delta}_i(t),\overline{\delta}_i(t)$ are designed as
\begin{align}
    \underline{\delta}_i(t)&=\delta_{1i}-x_{ai}(t)\\
    \overline{\delta}_i(t)&=\delta_{2i}+x_{ai}(t)
\end{align}
with $\delta_{1i}\in \mathbb{R}^+$, $\delta_{2i}\in \mathbb{R}^+$ being designed parameters, and $x_{ai}$ the $i$th element of $x_a$, which is an auxiliary signal given by:
\begin{align}
\left\{
\begin{aligned}
\dot{q}_{1i}&=h_{2i}\\
\dot{q}_{2i}&=-\omega_{ai}^2q_{1i}-2\omega_{ai}q_{2i}+\Delta u_i\\
x_{ai}&=\frac{q_{1i}}{\rho_{\epsilon i}}
    \end{aligned}
    \right.
\end{align}
where $\omega_{ai}\in\mathbb{R}^+$ is a designed constant, and $\Delta u_i =u_i-v_i$. 
Similar to \eqref{equ:transformation}. We can define the error  transformation as:
\begin{align}
\varepsilon_i
=\frac{1}{2}\ln(\frac{\overline{\delta}_i+\frac{\epsilon_{pi}}{\rho_{\epsilon i}}}{\underline{\delta}_i-\frac{\epsilon_{pi}}{\rho_{\epsilon i}}}),
\hspace{2em} i=1,\dots,N+M
\end{align}
The derivative of $\varepsilon_i$ can be obtained as
\begin{align}
\dot{\varepsilon}_i=\frac{1}{2}(\frac{1}{\frac{\epsilon_{pi}}{\rho_{\epsilon i}}+\underline{\delta}_i}-\frac{1}{\frac{\epsilon_{pi}}{\rho_{\epsilon i}}-\overline{\delta}_i})(\frac{\epsilon_{vi}}{\rho_{\epsilon i}}-\frac{\epsilon_{pi}\dot{\rho}_{\epsilon i}}{\rho_{\epsilon i}^2}-\dot{x}_{ai})
\end{align}
Then the derivative of $\dot{\varepsilon_i}$ can be calculated as
\begin{align}
\label{equ:ddot_varepsilon}
\ddot{\varepsilon}_i=\frac{1}{2}(\frac{1}{\frac{\epsilon_{pi}}{\rho_{\epsilon i}}+\underline{\delta}_i}-\frac{1}{\frac{\epsilon_{pi}}{\rho_{\epsilon i}}-\overline{\delta}_i})(\frac{\dot{\epsilon}_{vi}}{\rho_{\epsilon i}}-\frac{\dot{q}_{2i}}{\rho_{\epsilon i}})+\Delta_i
\end{align}
where $\Delta_i$ is the remaining terms in the expression, and $\Delta_i=\frac{1}{2}(\frac{1}{(\frac{\epsilon_{pi}}{\rho_{\epsilon i}}-\overline{\delta}_i)^2}-\frac{1}{(\frac{\epsilon_{pi}}{\rho_{\epsilon i}}+\underline{\delta}_i)^2})(\frac{\epsilon_{vi}}{\rho_{\epsilon i}}-\frac{\epsilon_{pi}\dot{\rho}_{\epsilon i}}{\rho_{\epsilon i}^2}
-\dot{x}_{ai})^2+\frac{1}{2}(\frac{1}{\frac{\epsilon_{pi}}{\rho_{\epsilon i}}+\underline{\delta}_i}-\frac{1}{\frac{\epsilon_{pi}}{\rho_{\epsilon i}}-\overline{\delta}_i})
(\frac{2(\epsilon_{pi}-q_{1i})\dot{\rho}_{\epsilon i}^2}{\rho_{\epsilon i}^3}-\frac{2(\epsilon_{vi}-q_{2i})\dot{\rho}_{\epsilon i}}{\rho_{\epsilon i}^2}-\frac{(\epsilon_{pi}-q_{1i})\ddot{\rho}_{\epsilon i}}{\rho_{\epsilon i}^2})$

Define a sliding mode surface $s$ as
\begin{align}\label{equ:s}
s_i=\lambda_{si}\varepsilon_i+\dot{\varepsilon}_i
\end{align}
where $\lambda_{si}$ is a designed positive constant. Using \eqref{equ:error_dynamics} and \eqref{equ:ddot_varepsilon}, one has
\begin{align}\label{equ:dot_s}
\begin{split}
\dot{s}_i=&\lambda_{si}\dot{\varepsilon}_i+\ddot{\varepsilon}_i\\
=&\frac{1}{2\rho_{\epsilon i}}(\frac{1}{\frac{\epsilon_{pi}}{\rho_{\epsilon i}}+\underline{\delta}_i}-\frac{1}{\frac{\epsilon_{pi}}{\rho_{\epsilon i}}-\overline{\delta}_i})(v_{i}-\dot{h}_{vi}-\dot{\zeta}_{vi}+\dot{\widetilde{\zeta}}_{vi}\\
&+\omega_{ai}^2q_{1i}+2\omega_{ai}q_{2i})+\Delta_i+\lambda_{si}\dot{\varepsilon}_i
\end{split}
\end{align}

To ensure the convergence of the sliding mode surface $s_i$, considering the UAVs-UGVs system with the communication link faults \eqref{equ:faultmodel}, the virtual control law $v_i$ can be designed as
\begin{align}\label{equ:control law}
\begin{split}
v_i=&-k_{si}s_i+\dot{h}_{vi}+\dot{\zeta}_{vi}-\omega_{ai}^2q_{1i}-2\omega_{ai}q_{2i}\\
&+\frac{2\rho_{\epsilon i}(\frac{\epsilon_{pi}}{\rho_{\epsilon i}}+\underline{\delta}_i)(\frac{\epsilon_{pi}}{\rho_{\epsilon i}}-\overline{\delta}_i)}{\delta_{1i}+\delta_{2i}}(\Delta_i+\lambda_{si}\dot{\varepsilon}_i)
\end{split}
\end{align}
where $k_{si}$ is designed constant. Substitute \eqref{equ:control law} into \eqref{equ:dot_s}, one has
\begin{align}\label{equ:s_dot}
\dot{s}_i=r_{si}(-k_{si}s_i+\dot{\widetilde{\zeta}}_{vi})
\end{align}
where $r_{si}=\frac{\delta_{1i}+\delta_{2i}}{2\rho_{\epsilon i}(\frac{\epsilon_{pi}}{\rho_{\epsilon i}}+\underline{\delta}_i)(\frac{\epsilon_{pi}}{\rho_{\epsilon i}}-\overline{\delta}_i)}$.

According to equations \eqref{equ:alpha} and \eqref{equ:conclusion}, the following expressions can be drawn. 
\begin{align}
\begin{split}
&\|\alpha_{2i}\|
=\|-\eta_{\zeta_vi}k_{2i}\xi_{vi}-\eta_{\zeta_vi}\mathcal{L}^\mathsf{T}r_ip_i\varepsilon_i\|\\
&\leq\|\eta_{\zeta_vi}k_{2i}\xi_{vi}\|+\|\eta_{\zeta_vi}\mathcal{L}^\mathsf{T}r_ip_i\varepsilon_i\|\\
&\leq\eta_{\zeta_vi}k_{2i}\max\|\mathcal{L}\|_F\|\widetilde{\zeta}_v\|+\eta_{\zeta_vi}\overline{r}\overline{p}\max\|\mathcal{L}^\mathsf{T}\|_F\|\varepsilon_i\|\\
&\leq\eta_{\zeta_vi}k_{2i}\max\|\mathcal{L}\|_F\sqrt{2\underline{\eta}\nu}+\eta_{\zeta_vi}\overline{r}\max\|\mathcal{L}^\mathsf{T}\|_F\sqrt{2\overline{p}\nu}\\
\end{split}
\end{align}
\begin{align}
\begin{split}
\|\dot{\widetilde{\zeta}}_{vi}\|
&=\|\dot{\zeta}_{vi}-\dot{\zeta}_{v0}\|
=\|\bar{\alpha}_{2i}-u_0\|\\
&\leq\|\bar{\alpha}_{2i}\|+\|u_0\|
\leq\|\alpha_{2i}\|+\|\widetilde{\alpha}_{2i}\|+\|u_0\|\\
&\leq\kappa
\end{split}
\end{align}
where $\kappa=\sqrt{2\nu}+\overline{u}+\eta_{\zeta_vi}k_{2i}\max\|\mathcal{L}\|_F\sqrt{2\underline{\eta}\nu}+\eta_{\zeta_vi}\overline{r}\max\|\mathcal{L}^\mathsf{T}\|_F\sqrt{2\overline{p}\nu}$.
\begin{theorem}
Suppose that Assumptions\ref{assumption:1}-\ref{assumption:4} hold. If the virtual control law $v_i$ is designed as \eqref{equ:control law}, the tracking error $\epsilon_{p}$ and $\epsilon_{v}$ are uniformly ultimately bounded.
\end{theorem}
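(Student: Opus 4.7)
The plan is to exploit the cancellation already achieved in \eqref{equ:s_dot}: once the virtual law \eqref{equ:control law} is in force, the sliding variable $s_i$ obeys the simple first-order perturbed dynamics $\dot{s}_i=r_{si}(-k_{si}s_i+\dot{\widetilde{\zeta}}_{vi})$, and the perturbation is globally bounded by $\kappa$ thanks to the observer result of Theorem~1 and the $\kappa$-bound established just above the statement. My argument proceeds in four short steps: bound $s_i$ by a Lyapunov inequality; propagate that bound through the stable scalar filter $\dot{\varepsilon}_i=s_i-\lambda_{si}\varepsilon_i$ to bound $\varepsilon_i$; invert the logarithmic transformation to bound $\epsilon_{pi}$ inside the performance tunnel; and finally recover $\epsilon_{vi}$.

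I would start with $V_{s}=\tfrac12\sum_{i}s_i^{2}$. Differentiating and substituting \eqref{equ:s_dot} gives
\begin{equation*}
\dot{V}_{s}=\sum_{i}r_{si}\bigl(-k_{si}s_i^{2}+s_i\,\dot{\widetilde{\zeta}}_{vi}\bigr),
\end{equation*}
and Young's inequality together with $\|\dot{\widetilde{\zeta}}_{vi}\|\le\kappa$ yields $\dot{V}_{s}\le -2\beta_{s}V_{s}+\sigma_{s}$ once $k_{si}$ is large enough, provided $r_{si}$ is bounded away from zero on the reachable set. This immediately produces $V_{s}(t)\le V_{s}(0)e^{-2\beta_{s}t}+\sigma_{s}/(2\beta_{s})$, so each $s_{i}$ is uniformly ultimately bounded by an explicit constant determined by $\kappa$ and the controller gains.

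Next, treating \eqref{equ:s} as the Hurwitz scalar system $\dot{\varepsilon}_i=-\lambda_{si}\varepsilon_i+s_i$, comparison gives $|\varepsilon_i(t)|\le |\varepsilon_i(0)|e^{-\lambda_{si}t}+\lambda_{si}^{-1}\sup_{\tau\le t}|s_i(\tau)|$, so $\varepsilon_i$ inherits the uniform ultimate bound. Invertibility of the logarithmic error transformation then forces $\epsilon_{pi}/\rho_{\epsilon i}$ to remain strictly inside the tunnel $(-\underline{\delta}_i,\overline{\delta}_i)$, and since $\rho_{\epsilon i}$ together with the boundary signals $\underline{\delta}_i,\overline{\delta}_i$ are bounded --- the auxiliary state $x_{ai}$ being the output of a Hurwitz second-order filter driven by the bounded saturation residual $\Delta u_i$ --- one obtains a uniform bound on $\epsilon_{pi}$. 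The velocity error $\epsilon_{vi}$ is then extracted from the identity $\dot{\varepsilon}_i=s_i-\lambda_{si}\varepsilon_i$ combined with the closed-form expression for $\dot{\varepsilon}_i$; the coefficient multiplying $\epsilon_{vi}/\rho_{\epsilon i}$ is strictly bounded away from zero on the interior of the tunnel, so inversion delivers a bound on $\epsilon_{vi}$.

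The main obstacle I anticipate is the mutual dependence between the Lyapunov analysis and the performance tunnel: the gain $r_{si}$ blows up as $\epsilon_{pi}/\rho_{\epsilon i}$ approaches the tunnel boundary, so the bound on $s_i$ and the non-escape of $\epsilon_{pi}$ have to be closed on a common forward-invariant set. I would handle this with a standard ``no finite-time escape'' argument in the Bechlioulis--Rovithakis style: if $\epsilon_{pi}/\rho_{\epsilon i}$ were to approach the boundary, $|\varepsilon_i|$ would have to diverge in finite time; but $|\varepsilon_i|$ is uniformly bounded via the chain $s_i\mapsto\varepsilon_i$, contradicting escape. This justifies \emph{a posteriori} the two-sided bound on $r_{si}$ used to derive $\dot{V}_s\le-2\beta_s V_s+\sigma_s$, thereby closing the proof of uniform ultimate boundedness of $(\epsilon_{pi},\epsilon_{vi})$.
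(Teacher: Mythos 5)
Your proposal follows essentially the same route as the paper's proof: a quadratic Lyapunov function in the sliding variable $s$ (the paper uses $V_s=\tfrac12 s^\mathsf{T}P_s s$, you use $P_s=I$), Young's inequality together with the $\kappa$-bound on $\dot{\widetilde{\zeta}}_{v}$ to get $\dot V_s\le-2\beta_s V_s+\sigma_s$, an ultimate bound on $s$, then on $\varepsilon$ through the sliding surface \eqref{equ:s}, and finally on $\epsilon_p,\epsilon_v$ by inverting the performance transformation (the paper's ``conclusion similar to \eqref{inequ:3}''). Your additional no-finite-time-escape argument and the boundedness check on the auxiliary signal $x_{ai}$ simply make explicit what the paper leaves implicit in assuming $\underline{r}_s$ and $\overline{r}_s$ exist, so the proposal is correct and, if anything, slightly more complete.
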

\begin{proof}
Consider the following Lyapunov candidate function
\begin{align}
V_s=\frac{1}{2}s^\mathsf{T} P_ s s
\end{align}
where $s=\left[s_{1},\dots,s_{(M+N)}\right]^\mathsf{T}$; $P_s=\mathrm{diag}(p_{si})$ is a designed positive diagonal  constant matrix. 

The dynamics of $V_s$ is
\begin{align}\label{V_s_dot}
\begin{split}
    \dot{V}_s&=s^\mathsf{T} P_ s \dot{s}\\
    &=s^\mathsf{T} P_ sR_s(-K_ss+\dot{\widetilde{\zeta}}_{v})\\
    &\leq-\underline{p}_{s}\underline{r}_{s}\underline{k}_{s}\cdot s^\mathsf{T} s+\overline{p}_{s}\overline{r}_{s}(c_s\cdot s^\mathsf{T} s+\frac{1}{4c_s}\|\dot{\widetilde{\zeta}}_{v}\|^2)\\
    &\leq-(\underline{p}_{s}\underline{r}_{s}\underline{k}_{s}-\overline{p}_{s}\overline{r}_{s}c_s)\cdot s^\mathsf{T} s+\frac{\overline{p}_{s}\overline{r}_{s}}{4c_s}\kappa^2\\
    &\leq-2\beta_s\cdot V_s+\sigma_s
\end{split}
\end{align}
where $R_s=\mathrm{diag}(r_{si})$, $K_s=\mathrm{diag}(k_{si})$, $\overline{p}_s=\max p_{si}$, $\overline{r}_s=\max r_{si}$, $\underline{p}_s=\min p_{si}$, $\underline{r}_s=\min r_{si}$, $\underline{k}_s=\min k_{si}$; $c_s$ is a design positive constant; $\beta_s=(\underline{p}_{s}\underline{r}_{s}\underline{k}_{s}-\overline{p}_{s}\overline{r}_{s}c_s)$, $\sigma_s=\frac{\overline{p}_{s}\overline{r}_{s}}{4c_s}\kappa^2$. 

if $k_{si}$ is selected such that $\beta_s>0$, then it is obvious that $\dot{V}_s<0$ is true while $V_s>\frac{\sigma_s}{2\beta_s}$. Thus, $s$ converge to residual sets $\Omega_s$, with
\begin{align}
\Omega_s\triangleq	\{s|\|s\|<\sqrt{\frac{\sigma_s}{\overline{p}_s\beta_s}}\}
\end{align}
Therefore, $s$ is bounded. According to \eqref{equ:s}, $\varepsilon$ also converges to a small residual set.
\begin{align}
\|\varepsilon_i\| < \frac{1}{\lambda_{si}} \sqrt{\frac{\sigma_s}{\overline{p}_s \beta_s}}
\end{align}
It further indicates the tracking error $\epsilon_{p}$ and $\epsilon_{v}$ are uniformly ultimately bounded with the conclusion similar to \eqref{inequ:3}. The proof is thus completed.
\end{proof}
\begin{figure}
    \centering
    \includegraphics[width=0.6\textwidth, trim=0pt 200pt 500pt 0pt, clip]{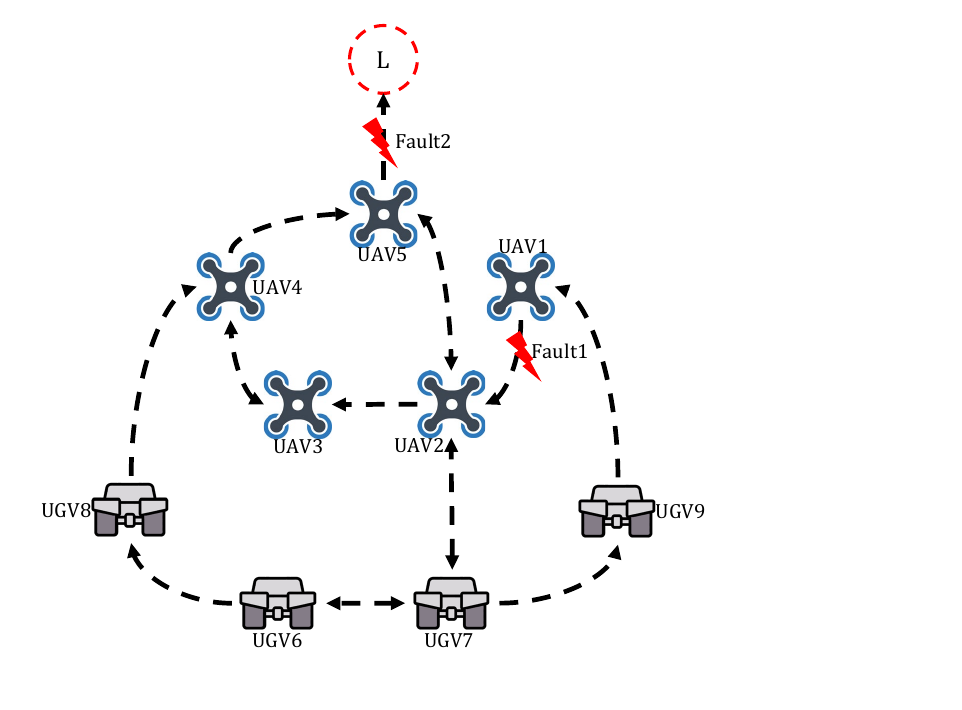}
    \caption{Communication topology.}
    \label{fig:topology}
\end{figure}

\section{Simulation Results}
This section presents a simulation study conducted on heterogeneous UAVs-UGVs collaborative systems to validate the practicality of the proposed distributed adaptive fault-tolerant formation control strategy.
\subsection{Simulation Condition}
This simulation study considers a heterogeneous collaborative multi-agent system (MAS) consisting of a virtual leader, $5$ follower quadrotor UAVs, and $4$ follower mobile robot UGVs. The communication interactions between the virtual leader and the follower UAVs and UGVs are represented by the directed graph shown in Fig. \ref{fig:topology}. In this graph, each communication edge is assigned a weight of one. The virtual leader is denoted as agent $0$, follower UAVs are represented by agents $1$–$5$, and follower UGVs are represented by agents $6$–$9$.
\begin{figure}[htbp]
    \centering
    \includegraphics[width=0.5\textwidth]{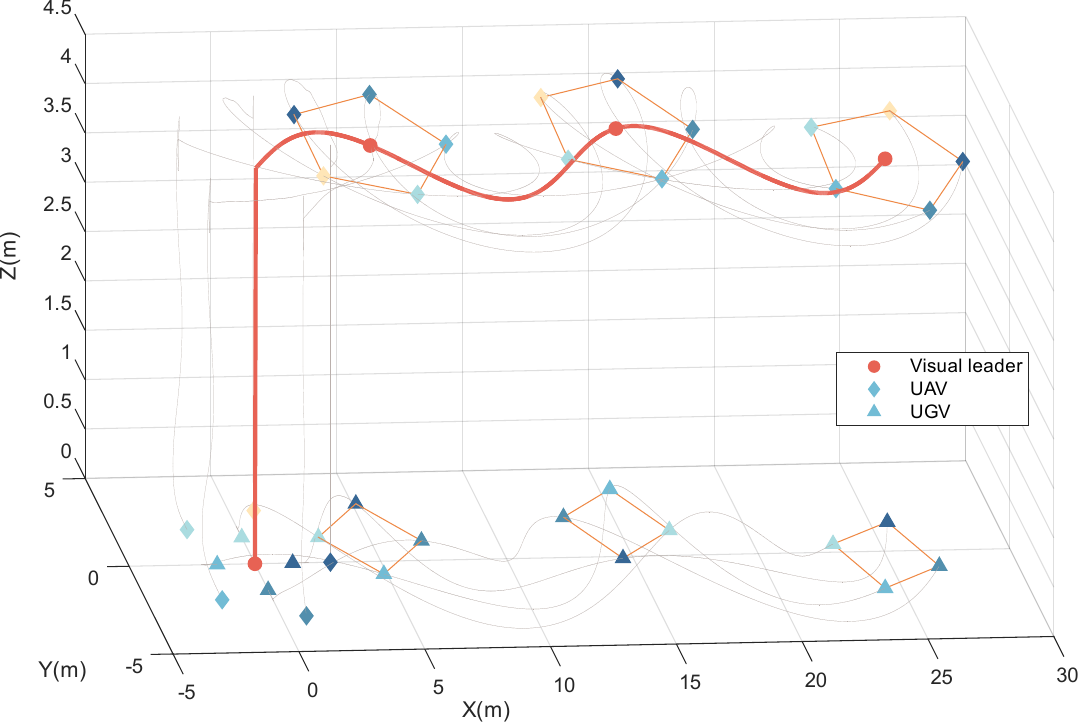} 
    \caption{Trajectories of five leader UAVs and four follower UGVs in the
XYZ plane.}
    \label{fig:3d} 
\end{figure}
\begin{figure}[htbp]
    \centering
    \includegraphics[width=0.5\textwidth]{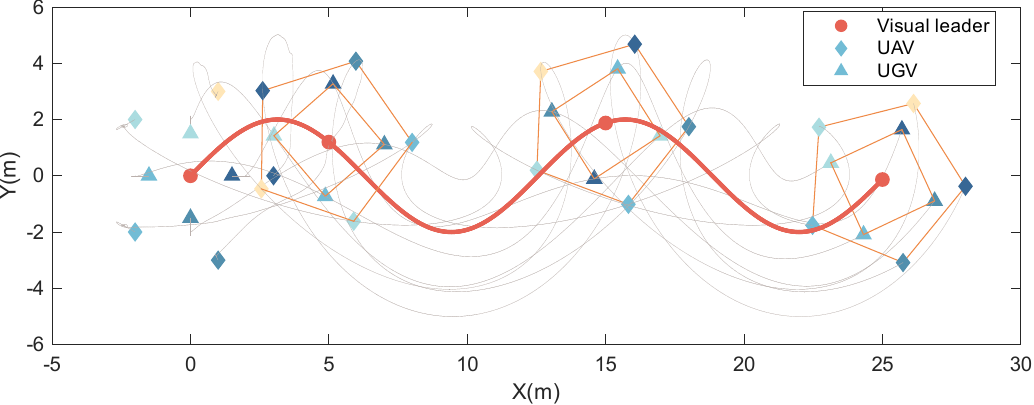} 
    \caption{Trajectories of five leader UAVs and four follower UGVs in the
XY plane.}
    \label{fig:2d} 
\end{figure}
The follower UAVs and UGVs are required to track the dynamic trajectory generated by the virtual leader while simultaneously achieving the desired time-varying formation configuration in the XYZ plane. For the follower quadrotor UAVs, the pitch and roll angles of their attitude systems are stabilized to ensure proper formation control. The parameters of the quadrotor are chosen as $m_{ai}=1.5~\text{kg}$, $I_{xi}=I_{yi}=0.02~\text{kg}\cdot\text{m}^2$, $I_{zi}=0.04~\text{kg}\cdot\text{m}^2$, where $i=1,2,3,4,5$.  The parameters of the two-wheel driven UGV are chosen as $m_{gi}=1$, $L_{ri}=0.2~\text{m}$, $d_{i}=0.1~\text{m}$, $J_{gi}=0.02~\text{kg}\cdot\text{m}^2$, $r_i=0.02~\text{m}$, where $i=6,7,8,9$. The desired tracking trajectory $\zeta_0$ of the virtual leader is chosen as $\zeta_0=[t-t_0,2\sin(0.5(t-t_0)),4]^\mathsf{T}$ while $t>t_0$. $t_0$ is the moment when the task begins and chosen as $5~\text{s}$. In the first $5$ seconds, the UAV is required to ascend to a height of $4~\text{m}$. The desired yaw angle of each UAV is set to $0$. The prespecified formation vectors $h_{xi}$ are set as
\begin{align*}
h_{xi}^{x,y}=\left\{
\begin{aligned}
\left[3\cos(0.5t+\frac{2i\pi}{5}),3\sin(0.5t+\frac{2i\pi}{5})\right]^\mathsf{T}&,i=1,\dots,5\\
\left[2\cos(0.3t+\frac{2i\pi}{4}),2\sin(0.3t+\frac{2i\pi}{4})\right]^\mathsf{T}&,i=6,\dots,9
    \end{aligned}
    \right.
\end{align*}
The performance function $\rho_i(t)$ and $\rho_{\epsilon i}(t)$ of each agent for $i=1,\dots,9$ is described as follows:

\begin{align*}
\rho_i(t)=\left\{
\begin{aligned}
&\rho_\infty\csc (\frac{\pi t}{2T})&,t\leq T\\
&\rho_\infty&,t>T
    \end{aligned}
    \right.
\end{align*}

\begin{figure}[htbp]
    \centering
    \includegraphics[width=0.5\textwidth]{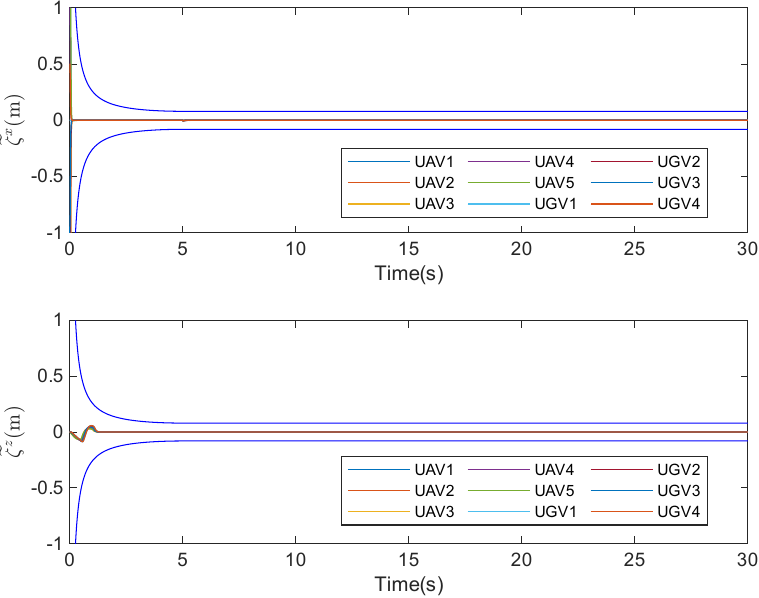} 
    \caption{The leader state observation errors in x/z-dimension.}
    \label{fig:xz} 
\end{figure}
\begin{figure}[htbp]
    \centering
    \includegraphics[width=0.5\textwidth]{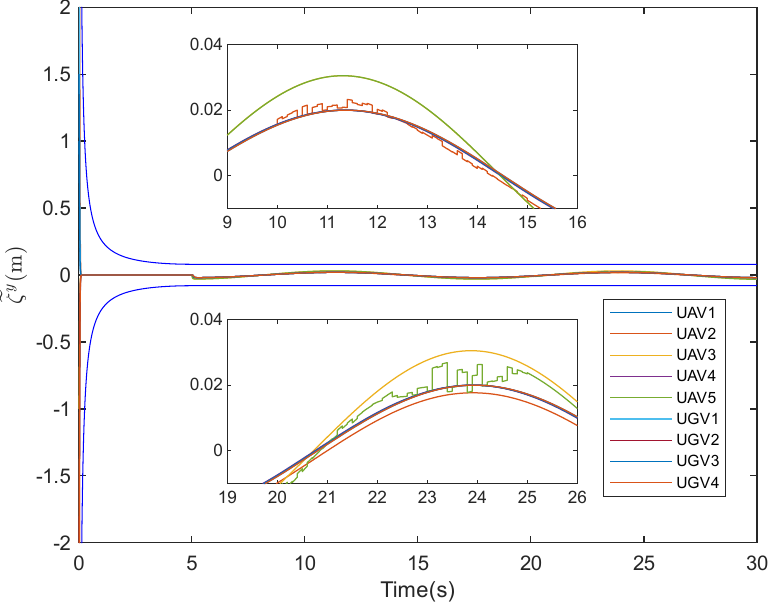} 
    \caption{The leader state observation errors in y-dimension.}
    \label{fig:y} 
\end{figure}

\begin{align*}
\rho_{\epsilon i}(t)=\left\{
\begin{aligned}
&\rho_{\epsilon \infty}\csc (\frac{\pi t}{2T})&,t\leq T\\
&\rho_{\epsilon \infty}&,t>T
    \end{aligned}
    \right.
\end{align*}

The parameters$T$, $\rho_\infty$ and $\rho_{\epsilon \infty}$ are chosen as: $T=5$, $\rho_\infty=0.1$, $\rho_{\epsilon \infty}=0.3$. The control parameters are chosen as: $k_{1i}^l=2$, $k_{2i}^l=50$, $\eta_{\zeta_vi}^l=1$, $p_i^l=1$, $\varsigma_{1i}^l=\varsigma_{2i}^l=0.01$, $\omega_{ai}^l=8$, $\lambda_{si}^l=5$, $k_{si}^x=k_{si}^y=5$, $k_{si}^z=10$, where $l=\{x,y,z\}$ and $i=1,2,\dots,9$. (The index $i$ of $\omega_{ai}^z$ and $k_{si}^z$ is only to $5$.)
Consider the communication link faults that occur from UAV 1 to UAV 2 and from the virtual leader to UAV 5 at different time intervals $10-15~\text{s}$ and $20-25~\text{s}$. Fig. \ref{fig:topology} shows the communication topology of the system in case of normal and failure. We design the corrupted weights as $\Delta a_{21}(t)$ and $\Delta b_{5}(t)$ to be $0.5 \sin(t) \cdot \mathrm{rand}()$, where $\mathrm{rand}()$ is a random signal chosen from the interval $[0, 1]$.

\begin{figure}[htbp]
    \centering
    \includegraphics[width=0.5\textwidth]{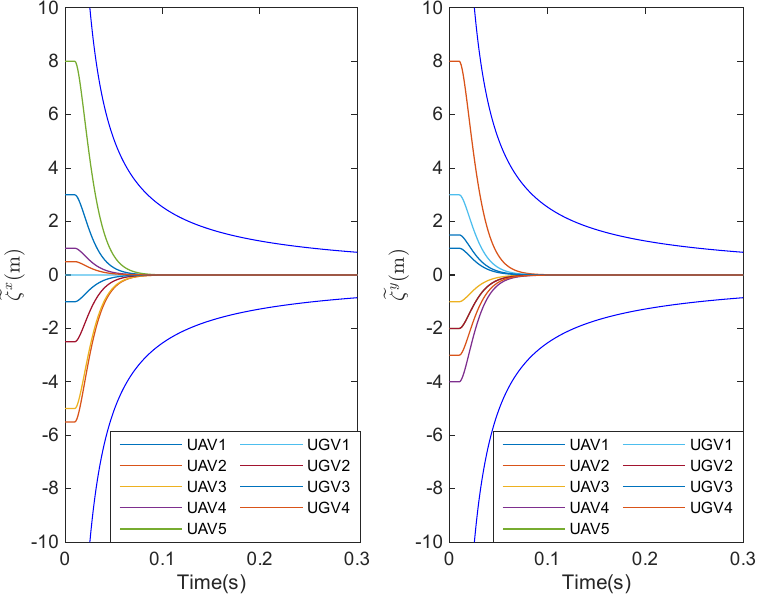} 
    \caption{The transient performance of the leader state observation errors in x/y-dimension.}
    \label{fig:xy} 
\end{figure}

\begin{figure}[htbp]
    \centering
    \includegraphics[width=0.5\textwidth]{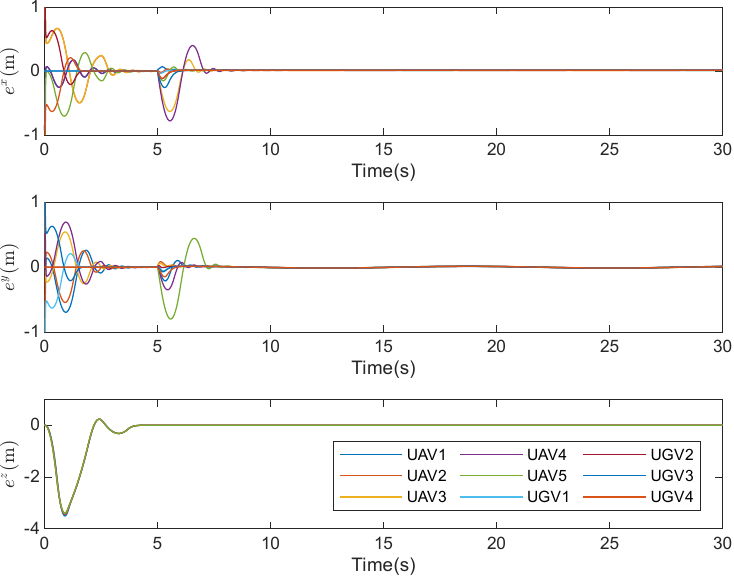} 
    \caption{The formation tracking errors of the
follower agent in different dimensions.}
    \label{fig:error} 
\end{figure}

The initial states of the UAVs-UGVs are chosen as follows: The positions of the follow er UAVs $x_{p1}=[1,3,0]^\mathsf{T}$, $x_{p2}=[-2,2,0]^\mathsf{T}$, $x_{p3}=[-2,-2,0]^\mathsf{T}$, $x_{p4}=[1,-3,0]^\mathsf{T}$, $x_{p5}=[3,0,0]^\mathsf{T}$, the position of the follower UGVs $x_{p6}=[0,1.5]^\mathsf{T}$, $x_{p7}=[-1.5,0]^\mathsf{T}$, $x_{p8}=[0,-1.5]^\mathsf{T}$, $x_{p9}=[1.5,0]^\mathsf{T}$. The leader state observation of each agent initially has the same value as their initial position. Therefore, at the beginning of the task, the leader state observation errors exist.

\subsection{Simulation Results}

\begin{figure}[htbp]
    \centering
    \includegraphics[width=0.5\textwidth]{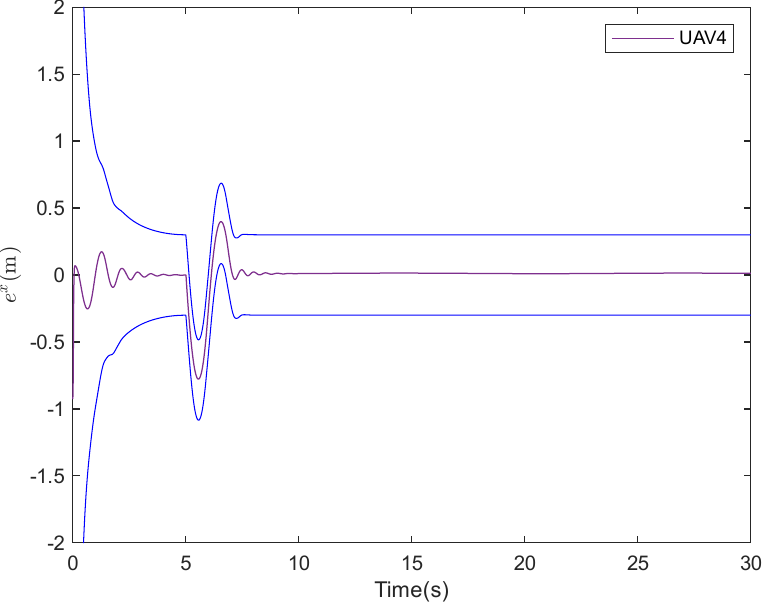} 
    \caption{The formation tracking errors of the
UAV4 in x-dimensions.}
    \label{fig:e4} 
\end{figure}

The distributed fault-tolerant formation tracking performance and the tracking errors are illustrated in Figs. \ref{fig:3d}–\ref{fig:e4}. Figs. \ref{fig:3d} and \ref{fig:2d} depict the position formation configuration of the heterogeneous UAV–UGV collaborative system in the XYZ plane and the XY plane at $t=10~\text{s}$, $20~\text{s}$ and $30~\text{s}$, respectively. In the figures, the position of the virtual leader is represented by the red circle, while diamonds and triangles respectively indicate the positions of five follower UAVs and four follower UGVs in different colors. The results show that the follower quadrotor UAVs and mobile robot UGVs successfully track the virtual leader's trajectory and achieve the desired polygonal formation structure, even in the presence of communication link faults.

Fig. \ref{fig:xz}-\ref{fig:xy} demonstrates the cooperative fault-tolerant leader state observer control performance of each agent in different dimensions. In the inset of Fig. \ref{fig:y}, the changes in the leader state observation errors in the y-dimension under communication link failure are shown. Fig. \ref{fig:xy} illustrates the transient performance of the leader state observation errors in the x/y-dimension. It indicates that the leader state observation errors in various dimensions exhibit good convergence behavior after a transient deviation from zero and still meet the predetermined performance requirements even after communication link faults occur. 

Fig. \ref{fig:error} demonstrates that the formation tracking errors of the follower agents asymptotically converge to zero, even under the influence of the changes in communication link weights and actuator saturation. 

In Fig. \ref{fig:e4}, the formation tracking error of the UAV4 in the x-dimension is shown in detail. From the results, it is evident that the preset performance boundaries change as expected, and the formation tracking errors of the follower agents asymptotically converge to a sufficiently small region.


\section{Conclusion}
This paper investigates the distributed fault-tolerant formation control problem for a heterogeneous UAV-UGV cooperative system under communication link failures and directed interaction topology. A distributed prescribed performance fault-tolerant control scheme is proposed, which ensures that UAVs can track the trajectory of a virtual leader and achieve the desired formation configuration, while UGVs converge into a specific convex hull formed by leader UAVs. The proposed method effectively addresses the challenges posed by the heterogeneity of system parameters and state dimensions in UAV-UGV systems. By designing appropriate performance functions, the scheme guarantees that leader state observation errors meet predefined transient and steady-state performance requirements. Additionally, a variable prescribed performance boundary control method with an adaptive learning rate is developed to handle actuator saturation, enhancing the practicality of the control approach. Simulation studies demonstrate the effectiveness and robustness of the proposed method in addressing formation control under real-world constraints. Future work will explore further extensions to more complex communication failures and dynamic environments. 

\section*{References}

\end{document}